\newtheorem{theorem}{Theorem}[section]
\newtheorem{proposition}[theorem]{Proposition}
\newtheorem{lemma}[theorem]{Lemma}
\theoremstyle{definition}
\newtheorem{remark}[theorem]{Remark}
\numberwithin{equation}{section}
\numberwithin{theorem}{section}
\newcommand{\mc}[1]{{\mathcal #1}}
\newcommand{\bb}[1]{{\mathbb #1}}
\newcommand{\rme}{\mathrm{e}}
\newcommand{\rmd}{\mathrm{d}}
\newcommand{\eps}{\varepsilon}
\newcommand{\id}{{1 \mskip -5mu {\rm I}}}
\title[A stochastic particle system approximating the BGK equation]{A stochastic particle system approximating the BGK equation}
\author[Paolo Butt\`a and Mario Pulvirenti]{}
\subjclass[2010]
{Primary: 82C40.  
Secondary: 60K35,  
82C22.  
}
\keywords{BGK equation, kinetic limits, stochastic particle dynamics.}
\email{butta@mat.uniroma1.it}
\email{pulviren@mat.uniroma1.it}
\begin{document}
\maketitle

\centerline{\scshape Paolo Butt\`a and Mario Pulvirenti}
\medskip
{\footnotesize
\centerline{Dipartimento di Matematica, Sapienza Universit\`a di Roma} 
\centerline{P.le Aldo Moro 5, 00185 Roma, Italy}
}

\bigskip


\begin{abstract}
We consider a stochastic $N$-particle system on a torus in which each particle moving freely can instantaneously thermalize according to the particle configuration at that instant. Following \cite{BHP}, we show that the propagation of chaos does hold and that the one-particle distribution converges to the solution of the BGK equation. The improvement with respect to \cite{BHP} consists in the fact that here, as suggested by physical considerations, the thermalizing transition is driven only by the restriction of the particle configuration in a small neighborhood of the jumping particle. In other words, the Maxwellian distribution of the outgoing particle is computed via the empirical hydrodynamical fields associated to the fraction of particles sufficiently close to the test particle and not, as in \cite{BHP}, via the whole particle configuration. 
\end{abstract}

\section{Introduction}
\label{sec:1}

Consider the following kinetic equation for the one-particle distribution function $f(x,v,t)$,
\begin{equation}
\label{BGK}
(\partial_t f+ v\cdot \nabla_x f) (x,v,t) = \lambda(\varrho) \big(\varrho(x,t) M_f (x,v,t)-f(x,v,t)\big)\,,
\end{equation}
where
\[
M_f(x,v,t) = \frac 1 {(2 \pi T(x,t))^{3/2} } \exp\left(-\frac {|v-u(x,t)|^2}{2T(x,t)}\right)\,,
\]
\[
\begin{split}
& \varrho (x,t) = \int\! \rmd v\, f(x,v,t)\,, \quad \varrho u (x,t) = \int\! \rmd v\, f(x,v,t) v\,, \\ & \varrho(|u|^2 +3T)(x,t)  = \int\! \rmd v\, f(x,v,t) |v|^2 \,,
\end{split}
\]
and $\lambda$ is a suitable positive function of the spatial  density. Here, $(x,v)$ denotes position and velocity of the particle, respectively, and $t>0$ is the time. 

The evolution equation \eqref{BGK} describes the behavior of a particle moving freely. In addition, the particle thermalizes instantaneously at a random time of intensity $\lambda >0$. The Maxwellian $M_f$ has mean velocity and temperature given by $f$ itself.

Such a model was introduced by P.L.~Bhatnagar, E.P.~Gross, and M.~Krook in \cite {BGK} to deal with situations when the mean free path of the particle system is very small, but the hydrodynamical regime is not yet appropriate.

In the original paper \cite{BGK} the authors consider $\lambda(\varrho)=\varrho$. This choice, even if natural, presents serious problems from the mathematical side. Indeed, up to now there exists a constructive existence and uniqueness result only when $\lambda=1$ \cite {PP}, which is the case treated here, or when $\lambda$ is a smooth bounded function, case in which the analysis of \cite{PP} can be easily extended. 

Clearly, the BGK model must preserve mass, momentum, energy, and satisfy the $H$-Theorem. It also exhibits the usual hydrodynamic behavior whenever $\lambda \to \infty$. The interest of the model is related to the fact that the instantaneous thermalization described by \eqref{BGK} is much easier to compute compared to a huge amount of collisions which however produces the same effect at the end.

We now try to present an heuristic derivation of the BGK equation, which is indeed not a simple arbitrary toy model but it is based on reasonable physical arguments.

The starting point is the usual Boltzmann equation,
\begin{equation}
\label{B1}
(\partial_t +v \cdot \nabla_x ) f =\frac 1{\eps} Q(f,f)\,,
\end{equation}
where $Q$ is the collision operator, which we do not make explicit here, and $\eps$ is a  very small scale parameter. Fixed $t>0$, we represent the solution $f(t)$ to \eqref{B1} with initial datum $f_0$ in terms of the Trotter product formula: letting $n=\lfloor t/\tau \rfloor$ with $\tau >0$ very small,
\begin{equation}
\label{T}
f(t) \approx f(n\tau)  \approx (S_0 (\tau) S_h(\tau))^nf_0\,,
\end{equation}
where $S_0(t)f(x,v)=f(x-vt,v)$ is the free stream operator and $S_h(t) f_0$ is the solution to the homogeneous Boltzmann equation
\[
\eps \partial_t  f = Q(f,f)
\]
with initial datum $f_0$.

By virtue of the well known properties of the homogeneous Boltzmann equation,
\[
\lim_{\eps \to 0} S_0(\tau) f = \varrho M_f\,,
\]
so that we can replace in Eq.~\eqref{T} the term $S_h(\tau)$ by the transition probability $P$ such that $f \to \varrho M_f$ with probability $\tau <1$ and $f \to f$ with probability $1-\tau $. Thus
\[
f(n\tau) \approx (S_0(\tau) P)^nf_0\,,
\]
i.e.,
\begin{align}
\label{TD}
f(n\tau) & \approx S_0(\tau)  (S_0(\tau) P)^{n-1}f_0  + S_0(\tau) (P-1) (S_0(\tau) P)^{n-1}f_0  \nonumber \\ & = \cdots\cdots\cdots \nonumber \\ & = S_0(n\tau)f_0 + \sum_{k=1}^n S_0(k\tau) (P-1) (S_0(\tau) P)^{n-k}f_0 \nonumber \\ & \approx S_0(n\tau)f_0 + \sum_{k=1}^n S_0(k\tau) (P-1)  f[ (n-k) \tau]\,.
\end{align}
But 
\[
\frac 1\tau (P-1)f = \varrho M_f- f\,,
\]
thus in the limit $\tau\to 0$ Eq.~\eqref{TD} reads
\[
f(t) = S_0(t)f_0  + \int_0^t \! \rmd s\, S_0(s)  ( \varrho M_f-f) (t-s)\,,
\]
or, equivalently, \eqref{BGK} with  $\lambda=1$.

We assume $\lambda=1$ however, when the density is high, the transition probability of the event $f \to \varrho M_f$ should  increase so that we can recover, by the above heuristic argument, the BGK equation with $\lambda=\varrho$ as well, as originally proposed in \cite{BGK}.

The heuristic argument leading to the BGK picture starts from the Boltzmann equation and applies in situations close to the hydrodynamical regime, but still with a finite mean free path. Thus, this equation does not seem to be consequence of a scaling limit for which one obtains either the Boltzmann equation in the  low density (or Boltzmann-Grad) limit, or the hydrodynamical equations in a mere space-time scaling.
 
In the present paper we introduce a stochastic system of $N$ interacting particles on a $d$-dimensional torus ($d=2,3$), and prove the convergence of the one-particle law to the corresponding solution to the BGK equation in the limit $N \to \infty$ . In doing this, we follow a previous similar approach developed in \cite{BHP}. The improvements we present here are related to the following aspect. In the particle model we introduce a function $x \to \varphi(x)$ ($x$ is the space variable) which we use to compute the local empirical hydrodynamical fields,
\[
\begin{split}
& \varrho (x) = \frac 1N \sum_{j=1}^N \varphi (x-x_j)\,, \quad \varrho u (x) = \frac 1N \sum_{j=1}^N \varphi (x-x_j)v_j\,, \\ & \varrho(|u|^2 +3T)(x) = \frac 1N \sum_{j=1}^N \varphi (x-x_j)v_j^2\,,
\end{split}
\]
where $(x_1 \cdots x_N; v_1 \cdots v_N)$ are the sequence of positions and velocities of the particle configuration. In \cite{BHP} the function $\varphi$ was assumed to be strictly positive to avoid divergences due to the possibility of a local vacuum. This is physically not very reasonable because long-range effects should not play any role in the jump mechanism of a tagged particle. Here, we remove such an hypothesis allowing $\varphi$ to be compactly supported. In other words, $\varphi$ can be thought as a smoothed version of the characteristic function of a small ball.

We remark  that the  analysis of the present paper is, in a sense, equivalent to the introduction of the stochastic particle system  which  is the inhomogeneous version of the well known Kac's model \cite{PWZ,CPW}, and it is the conceptual basis of the  usual Monte Carlo Direct Simulation Method (Bird's scheme), see, e.g., \cite{CIP}, to approximate the solutions of the usual Boltzmann equation. 

Here, we work in a canonical context, i.e., the number of particles $N$ is fixed. In \cite{MW} the authors derive a linear version of the homogeneous BGK equation, starting from a suitable two species particle system in the microcanonical setting. Namely, the energy of the system they consider is also fixed. This is more related  to the spirit of the original Kac's model. 

The plan of the paper is the following. We first fix the cutoff function $\varphi$ and a regularized version of the BGK equation (see Eq.~\eqref{eq:kin} below) in which the hydrodynamical fields are smeared. After fixing notation and establishing the main result in Section \ref{sec:2}, we introduce in Section \ref{sec:3} a coupling between the particle system and $N$ independent copies of a one-particle stochastic process associated to the regularized BGK equation, which is the basic tool for the proof of convergence. Sections \ref{sec:4} and \ref{sec:5} contain the preliminary lemmas and the key result on the proximity of the particle system to the regularized equation. In Section \ref{sec:6} we remove the cutoff as in \cite{BHP}. But here we choose a different method, by working not on the equations but on the processes, again with a coupling technique. As matter of facts, the convergence of the laws is obtained in the 2-Wasserstein distance, hence weaker with respect to the result in \cite{BHP} holding in a weighted $L^1$ space, but in addition here we also prove the convergence of the processes, and this is closer to the spirit of the present analysis. The final result follows by a diagonal limit. 

\section{Notation and results}
\label{sec:2}

For $d=2,3$ we let 
\begin{itemize}
\item $\bb T^d = \big(\bb R/(\frac 12 + \bb Z)\big)^d$ be the $d$-dimensional torus of side length one. 
\item $\varphi\in C^\infty(\bb T^d;\bb R_+)$ be a smearing function such that\footnote{The assumption $\varphi_0\ge 1$ in unnecessary but it makes cleaner some estimates. On the other hand, it is not restrictive as we are interested in the case when $\varphi$ converges to the $\delta$-function.}
\begin{equation}
\label{eq:varphi}
\begin{split}
& \varphi_0 := \varphi(0) = \max\varphi \ge 1 \,, \quad \varphi(x) = 0 \text{ for } |x| \ge \frac 12\,, \\  & \varphi(x) = \varphi(-x)\,, \quad \int\!\rmd y\, \varphi(y) = 1\,,
\end{split}
\end{equation}
where for $x\in \bb T^d$ we denote by $|x|$ its distance from $0$ (on the torus).   
\item $M_{u,T}=M_{u,T}(v)$, $v\in\bb R^d$, be the normalized Maxwellian density of mean velocity $u\in \bb R^d$ and temperature $T$, i.e.,
\begin{equation}
\label{max}
M_{u,T}(v) = \frac1{(2\pi T)^{d/2}} \exp\left(-\frac{|v-u|^2}{2T}\right).
\end{equation}
\end{itemize}
Note that
\[
u = \int\!\rmd v\, M_{u,T}(v)\, v \,, \qquad T = \frac 1d \int\!\rmd v\, M_{u,T}(v)\, |v-u|^2\,.
\]

\subsection{The BGK equation}
\label{sec:2.1}

We denote by $f = f(t) = f(x,v,t)$, where $(x,v)\in \bb T^d \times \bb R^d$ and $t\in \bb R_+$ is the time, the solution to the BGK equation,
\begin{equation}
\label{eq:bgk}
\partial_t f + v\cdot \nabla_x f = \varrho_f M_f - f\,,
\end{equation}
where $\varrho_f= \varrho_f(x,t)$ is the local density defined by
\begin{equation}
\label{eq:rho}
\varrho_f(x,t) = \int\!\rmd v\, f(x,v,t)\,,
\end{equation}
while $M_f = M_f(x,v,t)$ is the (local) Maxwellian given by
\begin{equation}
\label{eq:maxf}
M_f(x,v,t) = M_{u_f(x,t),T_f(x,t)}(v)\,,
\end{equation}
with $u_f=u_f(x,t)$ and $T_f=T_f(x,t)$ the local velocity and temperature, 
\begin{align}
\label{eq:uf}
\varrho_f(x,t) u_f(x,t) & = \int\! \rmd v\, f(x,v,t)\, v\,, \\ \label{eq:tf} \varrho_f(x,t) T_f(x,t) & = \frac 1d \int\! \rmd v\, f(x,v,t)\, |v-u_f(x,t)|^2\,.
\end{align}

Well-posedness of the BGK equation together with $L^\infty$ estimates for the hydrodynamical fields can be found in \cite{PP}. In particular, we consider as initial condition a probability density $f_0$ on $\bb T^d\times \bb R^d$ such that there are a function $a\in C(\bb R^d)$ and positive constants $C_1,\alpha>0$ such that
\begin{equation}
\label{eq:f0}
\begin{split}
& a(v) \le f_0(x,v) \le C_1\rme^{-\alpha|v|^2} \quad \forall\, (x,v)\in \bb T^d\times \bb R^d\,, \\ & a \ge 0\,, \quad C_2 := \int\!\rmd v \, a(v) >0\,.
\end{split}
\end{equation}
Therefore, from \cite[Theorem 3.1]{PP} the following proposition follows.

\begin{proposition}
\label{prop:bgk}
There exists a mild solution $f=f(t)=f(x,v,t)$ to Eq.~\eqref{eq:bgk} with initial condition $f(x,v,0) = f_0(x,v)$ satisfying Eq.~\eqref{eq:f0}.\footnote{This means that $f$ solves the integral equation,
\[
f(x,v,t) =  f_0(x-vt,v) + \int_0^t\!\rmd s\,(\varrho_f M_f-f) (x-v(t-s),v,s)\,,
\] 
which formally derives from Eq.~\eqref{eq:bgk} via Duhamel formula.} Moreover, there are a non-decreasing finite function $t\mapsto K_{q,t} = K_{q,t}(f_0)$, $q\in\bb N$, and a non-increasing positive function $t\mapsto A_t = A_t(f_0)$ such that, for any $(x,t)\in \bb T^d\times \bb R_+$,
\begin{align}
\label{eq:utf}
& |u_f(x,t)| + T_f(x,t) + \mc N_q(f(t)) \le K_{q,t}\,, \\ \label {eq:rT>} & \varrho_f(x,t) \ge C_2 \rme^{-t}\,, \quad T_f (x,t) \ge A_t \,,
\end{align}
where
\begin{equation}
\label{eq:Nn}
\mc N_q(f) := \sup_{(x,v)\in \bb T^d\times \bb R^d} f(x,v) (1+ |v|^q)\,.
\end{equation}
Finally, the above solution is unique in the class of functions $f=f(t)=f(x,v,t)$ such that, for some $q>d+2$, $\sup_{t\le \tau}\mc N_q(f(t)) < +\infty$ for any $\tau>0$.
\end{proposition}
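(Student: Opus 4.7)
The plan is to construct $f$ as a fixed point of the Duhamel operator
\[
\Phi[g](x,v,t) := \rme^{-t} f_0(x-vt,v) + \int_0^t\!\rmd s\,\rme^{-(t-s)} (\varrho_g M_g)(x-v(t-s),v,s)\,,
\]
obtained by treating the damping $-f$ in \eqref{eq:bgk} as an integrating factor and $\varrho_f M_f$ as a source. I would iterate starting from the free-streamed datum $f^{(0)}(x,v,t) := \rme^{-t}f_0(x-vt,v)$ and show that $f^{(n)} := \Phi[f^{(n-1)}]$ stays in a closed class of nonnegative densities on which $\Phi$ is a contraction on short time intervals; since all the a priori bounds below remain finite on $[0,\tau]$ for any $\tau<\infty$, gluing finitely many such intervals then yields a global mild solution.

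The closed class is cut out by propagating the bounds stated in the proposition along the iteration. From $f_0\ge a(v)\ge 0$ and the non-negativity of $\varrho_g M_g$, the free-transport structure of $\Phi$ gives the pointwise minorant $f^{(n)}(x,v,t)\ge \rme^{-t}a(v)$, whose integral in $v$ is the lower bound $\varrho_{f^{(n)}}\ge C_2\rme^{-t}$. The key upper bound is on $\mc N_q$ for a fixed $q>d+2$: multiplying the Duhamel identity by $(1+|v|^q)$ and using $\sup_v(1+|v|^q)M_{u,T}(v)\lesssim 1+|u|^q+T^{q/2}$, one reduces to controlling $\varrho_g$ and $(|u_g|+T_g)$ along the iteration, from which a linear Gronwall inequality propagates $\sup_{s\le t}\mc N_q(f^{(n)}(s)) \le K_{q,t}$. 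The bound on $(|u_f|,T_f)$ is in turn recovered from $\mc N_q(f)$ with $q>d+2$, which makes the moments $\int f|v|^k\,\rmd v$, $k\le 2$, uniformly bounded in $x$, combined with the density lower bound appearing in the denominators of \eqref{eq:uf}--\eqref{eq:tf}.

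I expect the main obstacle to be the strict positive lower bound $T_f(x,t)\ge A_t>0$, since the BGK source can in principle collapse the local velocity spread and the estimates above alone only give $T_f\ge 0$. The idea is to exploit again the minorant $f(x,v,t)\ge \rme^{-t}a(v)$: for every $u\in\bb R^d$,
\[
\varrho_f(x,t)\, T_f(x,t) = \frac{1}{d}\int\!\rmd v\, f(x,v,t)|v-u_f(x,t)|^2 \ge \frac{\rme^{-t}}{d}\min_{u\in\bb R^d}\int\!\rmd v\, a(v)|v-u|^2\,,
\]
and the right-hand side is a strictly positive constant because $a$ is a nontrivial nonnegative function and therefore has positive variance about its barycenter. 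Dividing by the upper bound on $\varrho_f$ coming from $\mc N_q$ yields $T_f\ge A_t$ with $A_t>0$ non-increasing. With all these bounds in force, the map $g\mapsto (\varrho_g,u_g,T_g,M_g)$ is locally Lipschitz in the $\mc N_q$-topology by elementary Gaussian calculus, and a Gronwall argument applied to $\mc N_q(f^{(n+1)}-f^{(n)})$ closes both the convergence of the iteration and uniqueness in the claimed class.
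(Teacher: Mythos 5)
The paper does not actually prove this proposition; it is quoted verbatim from Perthame--Pulvirenti \cite[Theorem 3.1]{PP} with no argument supplied, so there is no paper proof to compare against line by line. Your Duhamel fixed-point strategy is the right skeleton and matches \cite{PP} in spirit, and two of your ingredients are genuinely correct and essential: the minorant $f(x,v,t)\ge \rme^{-t}a(v)$, which yields the density lower bound $\varrho_f\ge C_2\rme^{-t}$, and the observation that the temperature lower bound follows from the positive variance of $a$ about its barycenter once one has an upper bound on $\varrho_f$.

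The gap is in the step where you claim $\sup_v(1+|v|^q)M_{u,T}(v)\lesssim 1+|u|^q+T^{q/2}$. That bound is false: evaluating at $v=u$ already gives $(1+|u|^q)(2\pi T)^{-d/2}$, so the correct statement carries an extra factor $T^{-d/2}$, which diverges exactly in the regime ($T\to 0$) you flagged as the main obstacle. This is not a cosmetic omission, because it turns your argument circular: the $T^{-d/2}$ factor means the $\mc N_q$ propagation needs a lower bound on $T_f$, while your proof of $T_f\ge A_t$ needs an upper bound on $\varrho_f$, which you extract from the $\mc N_q$ bound. The missing ingredient that breaks the circle in \cite{PP} is the interpolation inequality
\[
\varrho \le C_d\, \|f\|_\infty\, T^{d/2}\,,
\]
obtained by splitting $\int f\,\rmd v$ over $\{|v-u|\le R\}\cup\{|v-u|>R\}$ and optimizing in $R$; it bounds $\varrho_f M_f$ pointwise by $C\|f\|_\infty \exp(-|v-u_f|^2/2T_f)$ with no negative power of $T_f$, so the Duhamel source is controlled by $\|f\|_\infty$ alone without assuming $T_f$ bounded below. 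Even after this repair, closing a \emph{linear} Gr\"onwall inequality for $\mc N_q$ is not immediate, since $u_f$ and $T_f$ depend on $\mc N_q(f)$ through the moments, producing an a priori superlinear right-hand side; \cite{PP} handles this by propagating a Gaussian envelope $f(x,v,t)\le C(t)\rme^{-\alpha(t)|v|^2}$ rather than the bare $\mc N_q$ norm, and you would need something of that sort to make the induction close globally in time.
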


\subsection{The stochastic particle system}
\label{sec:2.2}

We consider a system of $N$ particles confined in the torus $\bb T^d$. We denote by $Z_N=(X_N,V_N)$ the state of the system, where $X_N\in(\bb T^d)^N $ and $V_N\in (\bb R^d)^N $ are the positions and velocities of the particles, respectively. The particles move randomly, governed by the stochastic dynamics defined as below.

Setting $X_N=(x_1, \dots, x_N)$ and $V_N=(v_1, \dots, v_N)$, we introduce the smeared empirical hydrodynamical fields $\varrho_N^\varphi$, $u_N^\varphi$, and $T_N^\varphi$ (depending on $Z_N$) defined by 
\begin{equation}
\label{empf}
\begin{split}
& \varrho_N^\varphi(x) = \frac 1N \sum_{j=1}^N \varphi (x-x_j)\,, \quad  \varrho_N^\varphi u_N^\varphi(x) = \frac 1N \sum_{j=1}^N \varphi (x-x_j) v_j\,, \\ & \varrho_N^\varphi T_N^\varphi(x) = \frac 1{Nd} \sum_{j=1}^N \varphi (x-x_j) |v_j-u_N^\varphi(x)|^2 \,,
\end{split}
\end{equation}
with $\varphi$ as in Eq.~\eqref{eq:varphi}. The particle system evolves according to the Markovian stochastic dynamics whose generator $\mc L_N$ is defined as
\begin{align}
\label{gen}
\mc L_N{G}(Z_N) & = [(V_N\cdot\nabla_{X_N})G] (Z_N) +  \sum _{i=1}^N \int\! \rmd \tilde x_i \, \varphi(\tilde x_i - x_i) \nonumber \\ &  \quad \times \left[ \int\! \rmd \tilde v_i\, M_{Z_N}^\varphi (\tilde x_i,\tilde v_i) G(Z_N^{i,(\tilde x_i,\tilde v_i)}) - G(Z_N) \right]\,.
\end{align}
Above (for given $(y,w) \in \bb T^d\times \bb R^d$), $Z_N^{i,(y,w)}=(X_N^{i,y},V_N^{i,w})$ is the state obtained from $Z_N=(X_N,V_N)$ by replacing the position $x_i$ and velocity $v_i$ of the $i$-th particle by $y$ and $w$ respectively; ${G}$ is a test function on the state space, and $M_{Z_N}^\varphi(x,v) $ is the Maxwellian associated to the empirical fields, i.e.,
\[
M_{Z_N}^\varphi(x,v) = M_{u_N^\varphi(x),T_N^\varphi(x)}(v)\,.
\]
Otherwise stated, the evolution $Z_N(t) = (X_N(t),V_N(t))$ is the Markov process in which at a random exponential time of intensity one the $i$-th particle performs a jump from its state $(x_i,v_i)$ to a new one $(\tilde x_i,\tilde v_i)$, extracted according to the distribution $\varphi(\cdot-x_i)$ for the position and then to the empirical Maxwellian $M_{Z_N}^\varphi (\tilde x_i,\cdot)$ for the velocity. 

The stochastic evolution is well posed since $\varrho_N^\varphi(\tilde x_i) \ge N^{-1} \varphi(\tilde x_i - x_i)$, so that  the Maxwellian $M_{Z_N}^\varphi (\tilde x_i,\tilde v_i)$ is well defined when $\varphi(\tilde x_i - x_i)>0$, and the integration in the right-hand side of Eq.~\eqref{gen} makes sense. On the other hand, the smeared hydrodynamical temperature $T_N^\varphi(\tilde x_i)$ may vanish, in which case we replace the Maxwellian $M_{Z_N}^\varphi (\tilde x_i,\tilde v_i)$ by a Dirac mass in $u_N^\varphi(\tilde x_i)$. In particular, this happens in the special case $\varrho_N^\varphi(\tilde x_i) = N^{-1} \varphi(\tilde x_i - x_i)$, which implies  $u_N^\varphi(\tilde x_i)=v_i$, so that $M_{Z_N}^\varphi (\tilde x_i,\tilde v_i) = \delta(\tilde v_i - v_i)$ (the velocity does not jump).

In the sequel, we will denote by $F_N(t) = F_N(Z_N,t)$ the density of the law of $Z_N(t)$ (but we will often refer to it as simply the law of the process). 

\subsection{The regularized BGK equation}
\label{sec:2.3}

The kinetic limit of the particle system introduced in Section \ref{sec:2.2} will be shown to be governed by the following regularized version of Eq.~\eqref{eq:bgk}, 
\begin{equation}
\label{eq:kin}
\partial_t g + v\cdot \nabla_x g = \varrho_g^\varphi M_g^\varphi - g\,,
\end{equation}
for the unknown distribution function $g = g(t) =g(x,v,t)$, where $M_g^\varphi$ is the Maxwellian
\begin{equation}
\label{eq:maxg}
M_g^\varphi(x,v,t) = M_{u_g^\varphi(x,t),T_g^\varphi(x,t)}(v)\,,
\end{equation}
and the fields $\varrho_g^\varphi = \varrho_g^\varphi(x,t) $, $u_g^\varphi=u_g^\varphi(x,t)$, and $T_g^\varphi=T_g^\varphi(x,t)$ are given by
\begin{align}
\label{eq:rphi}
\varrho_g^\varphi(x,t) & = (\varphi*\varrho_g) (x,t) = \int\!\rmd y\, \varphi(x-y)\varrho_g(y,t), \\ \label{eq:uphi}
\varrho_g^\varphi(x,t) u_g^\varphi (x,t) & = \int\! \rmd y\, \rmd v\, \varphi(x-y) g(y,v,t)\, v\,, \\ \label{eq:tphi} \varrho_g^\varphi(x,t) T_g^\varphi (x,t) & = \frac 1d \int\! \rmd y\, \rmd v\, \varphi(x-y) g(y,v,t)\, |v-u_g^\varphi(x,t)|^2\,,
\end{align}
with
\begin{equation}
\label{eq:rhog}
\varrho_g(x,t) = \int\!\rmd v\, g(x,v,t)\,.
\end{equation}

The content of Proposition \ref{prop:bgk} extends to the regularized BGK equation, in particular the $L^\infty$ estimates do not depend on the smearing function $\varphi$. This is the matter of \cite[Proposition 2.2]{BHP} - which we report below for the convenience of the reader, noticing that it applies also in the present context since the proof does not depend on the assumption (done in \cite{BHP}) that $\varphi$ is strictly positive.   

\begin{proposition}
\label{prop:stim_uT}
Let $g=g(t)=g(x,v,t)$ be the solution to Eq.~\eqref{eq:kin} with initial condition $g(x,v,0) = f_0(x,v)$, $f_0$ as in Proposition \ref{prop:bgk}, i.e., satisfying Eq.~\eqref{eq:f0}. Then, similar estimates hold for the corresponding hydrodynamical fields, namely,
\begin{align}
\label{eq:utK}
& |u_g^\varphi(x,t)| + T_g^\varphi(x,t) + \mc N_q(g(t)) \le K_{q,t}\,, \\ \label{stimrho} & \varrho_g(x,t) \ge C_2 \rme^{-t}\,, \quad \varrho_g^\varphi(x,t) \ge C_2\rme^{-t}\,, \\ \label{eq:TA} & T_g^\varphi (x,t) \ge A_t\,,
\end{align}
(with $t\mapsto K_{q,t} = K_{q,t}(f_0)$, $q\in\bb N$ non-decreasing and $t\mapsto A_t = A_t(f_0)$ non-increasing, both positive and independent of $\varphi$). 
\end{proposition}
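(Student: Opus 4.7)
The estimates parallel those of Proposition~\ref{prop:bgk}, and the only real issue is to verify that the convolution by $\varphi$ does not spoil them---in particular, that no use of $\inf\varphi>0$ is ever needed. The natural starting point is the mild formulation of Eq.~\eqref{eq:kin}: integrating $(\partial_t + v\cdot\nabla_x + 1)g = \varrho_g^\varphi M_g^\varphi$ along the characteristic $s\mapsto(x-v(t-s),v)$ yields
\[
g(x,v,t) = \rme^{-t} f_0(x-vt,v) + \int_0^t\! \rmd s\, \rme^{-(t-s)} (\varrho_g^\varphi M_g^\varphi)(x-v(t-s), v, s)\,.
\]
Dropping the (nonnegative) second term and using $f_0(\cdot,v)\ge a(v)$ immediately gives $g(x,v,t)\ge \rme^{-t}a(v)$, so $\varrho_g(x,t)\ge C_2\rme^{-t}$. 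The corresponding bound $\varrho_g^\varphi(x,t)\ge C_2\rme^{-t}$ then follows from $\int\!\rmd y\,\varphi(y)=1$---this is exactly the step where one might fear requiring $\inf\varphi>0$, and it is the point at which the argument of \cite{BHP} is seen to go through unchanged.

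For the upper bounds on $|u_g^\varphi|$, $T_g^\varphi$, and $\mc N_q(g(t))$, I would run the bootstrap already used in \cite{PP,BHP}. Two elementary facts keep every constant independent of $\varphi$: (i) Young's inequality gives $\|\varrho_g^\varphi\|_\infty\le\|\varphi\|_1\|\varrho_g\|_\infty = \|\varrho_g\|_\infty \le c_q\mc N_q(g)$ for a $\varphi$-free $c_q$ (valid if $q>d$), with no $\varphi_0$-factor; (ii) the identity $\varrho_g^\varphi(|u_g^\varphi|^2 + d\,T_g^\varphi) = \int\!\rmd y\,\rmd v\,\varphi(x-y)g(y,v,s)|v|^2$, combined with (i) and the lower bound on $\varrho_g^\varphi$ from the previous paragraph, shows that $|u_g^\varphi|+T_g^\varphi$ is controlled by $\mc N_q(g(s))$. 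Inserting these into the Duhamel formula and estimating $M_g^\varphi$ pointwise by its prefactor $(2\pi T_g^\varphi)^{-d/2}$ times a Gaussian in $v-u_g^\varphi$ closes a Gronwall-type inequality for $t\mapsto \mc N_q(g(t))$, giving the claimed $\varphi$-independent $K_{q,t}$.

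The lower bound $T_g^\varphi(x,t)\ge A_t$ is obtained in the same spirit: from $g(y,v,t)\ge\rme^{-t}a(v)$ and $\int\!\rmd y\,\varphi(y)=1$,
\[
\varrho_g^\varphi(x,t)\,T_g^\varphi(x,t) \ge \frac{\rme^{-t}}{d}\int\!\rmd v\,a(v)\,|v-u_g^\varphi(x,t)|^2 \ge \frac{c(a)\rme^{-t}}{d}\,,
\]
where $c(a):=\inf_{u\in\bb R^d}\int\!\rmd v\,a(v)\,|v-u|^2$ is strictly positive (the minimum, attained at the $a$-centroid, is the $a$-weighted variance, positive because $a\not\equiv 0$ by~\eqref{eq:f0}). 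Dividing by the upper bound $\varrho_g^\varphi\le c_q K_{q,t}$ from the previous paragraph produces $A_t = c(a)\rme^{-t}/(d\,c_q K_{q,t})$, independent of $\varphi$ and non-increasing in $t$.

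The main obstacle is precisely the bootstrap in the second paragraph combined with the requirement of uniformity in $\varphi$. The key mechanism is that every estimate on a smeared field reduces, via $\int\!\rmd y\,\varphi(y)=1$ and Young's inequality, to an estimate on the un-smeared field, so that the argument of \cite{PP,BHP} adapts essentially verbatim with no reliance on $\inf\varphi>0$; once this is recognized, the compactly supported nature of $\varphi$ plays no role.
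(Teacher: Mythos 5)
The paper does not actually prove this proposition; it simply cites \cite[Proposition~2.2]{BHP} and observes that the proof given there never uses $\inf\varphi>0$. Your sketch therefore supplies detail the paper omits, and most of it is sound: the mild (Duhamel) form, the lower bounds $\varrho_g\ge C_2\rme^{-t}$ and $\varrho_g^\varphi\ge C_2\rme^{-t}$ via $\int\varphi=1$, the two ``$\varphi$-free'' observations (Young's inequality with $\|\varphi\|_1=1$, and the second-moment identity), and the lower bound on $T_g^\varphi$ via the $a$-weighted variance $c(a)=\inf_u\int a(v)|v-u|^2\,\rmd v>0$ divided by the already-established upper bound on $\varrho_g^\varphi$. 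All of these are correct and capture exactly why the argument of \cite{BHP} survives dropping strict positivity of $\varphi$.

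The one step you gloss over in a potentially misleading way is the closure of the Gronwall inequality for $\mc N_q(g(t))$. As written---``estimating $M_g^\varphi$ pointwise by its prefactor $(2\pi T_g^\varphi)^{-d/2}$ times a Gaussian''---you would need a lower bound on $T_g^\varphi$ at that stage, but in your outline that lower bound is derived \emph{afterwards}, using the very constant $K_{q,t}$ you are trying to obtain; taken literally, the argument is circular. The way \cite{PP} (and hence \cite{BHP}) avoids this is via the self-referential pointwise bound
\[
\varrho\, M_{u,T}(v)\,(1+|v|^q)\le C_q\,\mc N_q(h)
\]
for any $h$ with density $\varrho$, bulk velocity $u$, temperature $T$, which rests on the inequality $\varrho/T^{d/2}\le C_d\|h\|_\infty$ and therefore never requires a positive lower bound on $T$. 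Applying this to $h=g^\varphi:=\varphi *_x g$, whose moments are precisely $\varrho_g^\varphi,u_g^\varphi,T_g^\varphi$ and which satisfies $\mc N_q(g^\varphi)\le\mc N_q(g)$ by Young's inequality, one gets $\varrho_g^\varphi M_g^\varphi(1+|v|^q)\le C_q\,\mc N_q(g)$ directly; this is what closes the Gronwall bound for $K_{q,t}$ first, after which your derivation of $A_t$ is legitimate. So the proposal is essentially right and consistent with what the cited sources do, but the Maxwellian estimate should be stated via this Perthame--Pulvirenti lemma rather than a naive prefactor bound, to make clear there is no circularity between \eqref{eq:utK} and \eqref{eq:TA}.
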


\subsection{Kinetic limit}
\label{sec:2.4}

We can now state the key result of the paper, concerning the kinetic limit of the stochastic particle system. 

\begin{theorem}
\label{teo:main}
Suppose that the law of $Z_N(0)$ has density $F_N(0) = f_0^{\otimes N}$, where $f_0$ satisfies the assumptions detailed in Eq.~\eqref{eq:f0}, and let $g=g(t)=g(x,v,t)$ be the solution to Eq.~\eqref{eq:kin} with initial condition $g(0)=f_0$. 

Let $f_j^N(t)$, $j\in\{1,\ldots, N\}$, be the $j$-particle marginal distribution function of the (symmetric) law $F_N(t)$,i.e.,
\[
f_j^N(x_1,\ldots,x_j,v_1,\ldots,v_j,t) = \int\! \rmd x_{j+1} \cdots \rmd x_N\, \rmd v_{j+1} \cdots \rmd v_N\, F_N(X_N,V_N,t)\,.
\]
Then, the 2-Wasserstein distance $\mc W_2\big(f_j^N(t), g(t)^{\otimes j}\big)$ vanishes as $N\to +\infty$ for any $j\in \bb N$ and $t \ge 0$. More precisely, for each $T>0$ there exists $L_T = L_T(f_0)$ such that, for any $j\in\{1,\ldots, N\}$,
\begin{equation}
\label{w2s}
\mc W_2\big(f_j^N(t),g(t)^{\otimes j}\big)^2 \le \frac{jL_T}{N^{1/4} } \exp (L_T\Gamma_\varphi) \quad \forall\, t\in [0,T] \quad \forall\, N>N_\varphi\,,
\end{equation}
where $\Gamma_\varphi$ and $N_\varphi$ are explicitly computable positive numbers depending solely on the smearing function $\varphi$ (see Eq.~\eqref{N0Gammaphi} below). In particular, the one particle marginal distribution function $f_1^N(t)$ converges weakly to $g(t)$ as $N\to +\infty$ for any $t \ge 0$.
\end{theorem}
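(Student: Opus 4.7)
The natural strategy, in the spirit of Sections \ref{sec:3}--\ref{sec:5}, is a coupling argument. I would first introduce an auxiliary ``nonlinear'' one-particle jump process $\tilde z(t) = (\tilde x(t), \tilde v(t))$ on $\bb T^d\times\bb R^d$: between jumps it moves freely, while at a Poisson time of rate $1$ it jumps to $(\tilde x',\tilde v')$ with $\tilde x'$ sampled from $\varphi(\cdot - \tilde x)$ and $\tilde v'$ from $M_g^\varphi(\tilde x',\cdot)$, where $g$ is the solution of the regularized BGK equation \eqref{eq:kin} with initial datum $f_0$. By construction, the one-time density of $\tilde z(t)$ solves \eqref{eq:kin}, and by Proposition \ref{prop:stim_uT} the driving coefficients are well-defined thanks to the uniform lower bounds \eqref{stimrho}--\eqref{eq:TA}. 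Starting $N$ independent copies $(\tilde z_i(t))_{i=1}^N$ from $f_0^{\otimes N}$ produces a process whose $j$-marginal at time $t$ is exactly $g(t)^{\otimes j}$.

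The next step is to build $Z_N(t)$ and $(\tilde z_i(t))_{i=1}^N$ on the same probability space via a synchronous coupling. I would use one common Poisson clock of rate $1$ per index $i$ and, at each joint jump time, (a) draw a single ``position perturbation'' $Y\sim\varphi$, assigning $\tilde x_i^N = x_i + Y$ and $\tilde x_i^g = \tilde x_i + Y$ (so that the position gap is unchanged by the jump), and (b) couple the two Gaussian velocity samples optimally, so that the squared transport cost equals the explicit quantity $|u_N^\varphi(\tilde x_i^N) - u_g^\varphi(\tilde x_i^g)|^2 + d\bigl(\sqrt{T_N^\varphi(\tilde x_i^N)} - \sqrt{T_g^\varphi(\tilde x_i^g)}\bigr)^2$. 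By exchangeability,
\[
\mc W_2\big(f_j^N(t), g(t)^{\otimes j}\big)^2 \le j\, \Delta_N(t)\,, \qquad \Delta_N(t) := \frac 1 N \sum_{i=1}^N \bb E\, |z_i(t) - \tilde z_i(t)|^2\,,
\]
so the task reduces to bounding $\Delta_N(t)$.

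The differential inequality for $\Delta_N(t)$ reads roughly $\dot\Delta_N \le C\Delta_N + \mc E_N(t)$: the first term collects the spread under free flight together with the Lipschitz contribution of the coupled Gaussian draw (using that $u_g^\varphi$ and $\sqrt{T_g^\varphi}$ depend smoothly on $x$), while $\mc E_N(t)$ measures the expected $\mc W_2^2$-gap between the empirical Maxwellian $M_{Z_N}^\varphi(\tilde x,\cdot)$ and the target $M_g^\varphi(\tilde x,\cdot)$, evaluated at a freshly drawn position. Using the explicit Gaussian formula, $\mc E_N$ is controlled by expectations of $|\varrho_N^\varphi - \varrho_g^\varphi|$, $|u_N^\varphi - u_g^\varphi|$, and $|T_N^\varphi - T_g^\varphi|$; each of these splits, via the triangle inequality against the analogous field built from the iid family $(\tilde z_i)$, into a pure mean-field fluctuation of size $O(N^{-1/2})$, estimated by a Chebyshev argument on sums of the form $\frac 1N\sum_j\varphi(\tilde x-\tilde z_j)\psi(\tilde v_j)$ using the moment bounds \eqref{eq:utK}, plus a genuine coupling term $\le C\Delta_N$ obtained from the Lipschitz continuity of $(x,v)\mapsto \varphi(\tilde x-x)\psi(v)$ on truncated velocity balls. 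A Gr\"onwall iteration on $[0,T]$ then closes the loop.

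The main obstacle, and the source of both the $N^{-1/4}$ rate and the constants $\Gamma_\varphi,N_\varphi$, is the local-vacuum singularity: since $\varphi$ is compactly supported, $\varrho_N^\varphi(\tilde x)$ may be anomalously small at the jump location, making $u_N^\varphi$ and $T_N^\varphi$ ill-behaved (they scale like $1/\varrho_N^\varphi$, cf.\ \eqref{empf}). I would therefore partition the expectation defining $\mc E_N$ via the indicator of the good event $\{\varrho_N^\varphi(\tilde x)\ge\tfrac12\varrho_g^\varphi(\tilde x)\}$: thanks to the lower bound $\varrho_g^\varphi\ge C_2\rme^{-t}$ from \eqref{stimrho} and a variance estimate for $\frac 1N\sum_j\varphi(\tilde x-\tilde z_j)$, the bad event has probability $O(N^{-1}\varphi_0^2\rme^{2t}/C_2^2)$; on it, a Cauchy--Schwarz bound combined with the moment control \eqref{eq:utK} dominates the contribution by $N^{-1/4}$ times a $\varphi$-dependent factor. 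The loss of a square root in Cauchy--Schwarz is exactly what degrades the rate from $N^{-1/2}$ to $N^{-1/4}$, and collecting the various powers of $\varphi_0$, $\|\nabla\varphi\|_\infty$, and $C_2^{-1}\rme^{T}$ defines $\Gamma_\varphi$ and the threshold $N_\varphi$ ensuring the good-event probability is nontrivial.
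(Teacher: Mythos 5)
Your proposal follows essentially the same route as the paper: couple the interacting system with $N$ independent copies of the nonlinear jump process via a common position perturbation and an optimal velocity coupling, define $\Delta_N(t)$ (the paper's $I_N(t)$), derive a Gr\"{o}nwall inequality for it, split the empirical--versus--target field mismatch into a mean-field fluctuation plus a coupling term, and treat the local-vacuum region by a good/bad-event decomposition with a Cauchy--Schwarz estimate on the bad event.

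There is, however, one nontrivial gap in how you set up the good event. You define it as $\{\varrho_N^\varphi(\tilde x)\ge\tfrac12\varrho_g^\varphi(\tilde x)\}$, i.e.\ directly in terms of the \emph{interacting} empirical density $\varrho_N^\varphi$, but then estimate its complement by a variance bound for $\frac1N\sum_j\varphi(\tilde x-\tilde z_j)$, which is the \emph{auxiliary} i.i.d.\ density $\tilde\varrho_N^\varphi$. These are not the same object, and a Chebyshev/variance argument applies only to the i.i.d.\ one. The paper handles this by a two-step construction: a set $\mc B_A$ on which $\tilde\varrho_N^\varphi$ is bounded below everywhere on the torus, whose complement has probability $O(r^{-3d}N^{-1})$ by Chebyshev applied to the i.i.d.\ family (Lemma \ref{lem:ny}); intersected with a second set $\mc G_1=\{\frac1N\sum_j|x_j-y_j|\le A_\varphi\}$, on which the Lipschitz estimate $|\varrho_N^\varphi-\tilde\varrho_N^\varphi|\le\|\nabla\varphi\|_\infty\cdot\frac1N\sum_j|x_j-y_j|$ transfers the lower bound to $\varrho_N^\varphi$ (Lemma \ref{lem:nx}, whence the role of $A_\varphi$ in Eq.~\eqref{afi}). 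The complement of $\mc G_1$ is then controlled by Chebyshev on $I_N$ itself, which is why a $1/A_\varphi^2$ factor appears in front of the Gr\"{o}nwall term in Eq.~\eqref{db3}. Without this intermediate step your bad-event probability estimate does not close. A smaller point: the Cauchy--Schwarz loss on the bad event yields $O(\varphi_0(r^{3d}N)^{-1/2})$, not $N^{-1/4}$; the further drop to $N^{-1/4}$ in Eq.~\eqref{w2s} is a bookkeeping device to absorb the $\varphi$-dependent prefactors uniformly for $N>N_\varphi$, not an additional loss in the probabilistic estimate.
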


\begin{remark}
\label{rem:wass}
We recall that if $\mu$ and $\nu$ are two probability measures on a metric space $(M,d)$ with finite second moment, the 2-Wasserstein distance between $\mu$ and $\nu$ is defined as 
\[
\mc W_2(\mu ,\nu) = \left(\inf_{\gamma\in \mc P(\mu,\nu)} \int_{M\times M}\!\rmd \gamma(x,x')\, d(x,x')^2\right)^{1/2},
\]
where $\mc P(\mu,\nu)$ denotes the collection of all the probability measures on $M\times M$ with marginals $\mu$ and $\nu$. In Theorem \ref{teo:main}, $M=(\bb T^d)^j\times (\bb R^d)^j$ and $\mc W_2\big(f_j^N(t), g(t)^{\otimes j}\big)$ denotes the 2-Wasserstein distance between the probability measures with densities $f_j^N(t)$ and $g(t)^{\otimes j}$, respectively.
\end{remark}

The proof of Theorem \ref{rem:wass} will be presented in Section \ref{sec:5} after some preliminaries in Sections \ref{sec:3} and \ref{sec:4}.

The convergence of the particle system to the true BGK equation Eq.~\eqref{eq:bgk} is now obtained through a rescaling of the smearing function $\varphi$, by setting
\begin{equation}
\label{scaphi}
\varphi (x) = \varphi^{(\eps)} (x) := \frac{1}{\eps^d}\bar \varphi\bigg(\frac{x}{\eps}\bigg)\,,
\end{equation}
where $\bar \varphi$ is fixed (it varies on the scale of order one) and satisfies \eqref {eq:varphi}. Clearly, in this case $\varphi_0 \approx \eps^{-d}$ and $\|\nabla \varphi\|_\infty \approx \eps^{-d-1}$. We have the following result.

\begin{theorem}
\label{teo:lim}
Let $\varphi=\varphi^{(\eps)}$ be as in Eq.~\eqref{scaphi}, suppose $f_0$ satisfies Eq.~\eqref{eq:f0} and in addition that, for some $q>d+2$,
\begin{equation}
\label{grad}
\mc N_q(|\nabla_x f_0|) < + \infty\,.
\end{equation}
Then, for each $T>0$ there exists $C_T = C_T(f_0)$ such that,
\begin{equation}
\label{conveps}
\mc W_2(f(t),g(t))^2 \le C_T\, \eps \quad \forall\, t\in [0,T]\,,
\end{equation} 
where $f(t)$ and $g(t)$ are the solutions to Eq.~\eqref{eq:bgk} and Eq.~\eqref{eq:kin} respectively, with same initial condition $f_0$. 
\end{theorem}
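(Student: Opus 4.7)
The plan is to prove \eqref{conveps} by a synchronous coupling of two nonlinear Markov processes associated with $f$ and $g$. To \eqref{eq:bgk} we attach the process $(X_t,V_t)$ on $\bb T^d\times\bb R^d$ whose one-time law is $f(t)$: free flow between jumps, with velocity resampled at rate one from $M_f(X_t,\cdot,t)$. To \eqref{eq:kin} we attach $(Y_t,W_t)$ whose one-time law is $g(t)$: free flow, and at rate one the position jumps via $Y_t\mapsto Y_t+\Delta$ with $\Delta\sim\varphi$, followed by a velocity resampling from $M_g^\varphi(Y_t+\Delta,\cdot,t)$. We couple the two processes by (i) a common rate-one Poisson clock, (ii) a common $\Delta\sim\varphi$ at each jump (applied only to $Y$, the $X$-position being unchanged), and (iii) the optimal Gaussian transport obtained from a common standard Gaussian $U$,
\[
V'=u_f(X_t,t)+\sqrt{T_f(X_t,t)}\,U,\qquad W'=u_g^\varphi(Y_t+\Delta,t)+\sqrt{T_g^\varphi(Y_t+\Delta,t)}\,U,
\]
with initial condition $X_0=Y_0$, $V_0=W_0$ jointly distributed as $f_0$.

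Set $D(t):=\mb E[|X_t-Y_t|^2+|V_t-W_t|^2]\ge\mc W_2(f(t),g(t))^2$. A generator computation yields
\[
D'(t)\le \mb E|X_t-Y_t|^2+\mb E|\Delta|^2+\mb E|V'-W'|^2,
\]
with $\mb E|\Delta|^2\le C\eps^2$ thanks to \eqref{scaphi} and the symmetry $\varphi(-x)=\varphi(x)$. After integrating out $U$ at the jump,
\[
\mb E|V'-W'|^2=\mb E|u_f(X_t,t)-u_g^\varphi(Y_t+\Delta,t)|^2+d\,\mb E\bigl(\sqrt{T_f(X_t,t)}-\sqrt{T_g^\varphi(Y_t+\Delta,t)}\bigr)^2,
\]
and the lower bounds $T_f,T_g^\varphi\ge A_T$ of Propositions~\ref{prop:bgk} and \ref{prop:stim_uT} let us replace the square-root difference by $(T_f-T_g^\varphi)^2/(4A_T)$. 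Both $|u_f(X_t)-u_g^\varphi(Y_t+\Delta)|$ and $|T_f(X_t)-T_g^\varphi(Y_t+\Delta)|$ are split into three pieces: (a) the $f$-field at $X_t$ versus $Y_t+\Delta$; (b) the $f$- versus $g$-fields both at $Y_t+\Delta$; (c) the $g$-field unsmeared versus $\varphi$-smeared at $Y_t+\Delta$. Piece (a) is controlled via the Lipschitz bound on $u_f,T_f$ by $C_T(\mb E|X_t-Y_t|^2+\eps^2)$, using the regularity \eqref{grad} (propagated to $\nabla_xf$ along the BGK flow) together with the density lower bounds to make $u_f,T_f$ Lipschitz in $x$ uniformly on $[0,T]$. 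Piece (c) is a standard first-order mollifier estimate: since $\varrho_g^\varphi u_g^\varphi=\varphi*(\varrho_g u_g)$ and $\varrho_g^\varphi=\varphi*\varrho_g$, the spatial gradient falls on $g$ rather than on $\varphi$, and \eqref{grad} for $g$ then provides an $O(\eps)$ pointwise estimate.

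The decisive step is piece (b). Since $Y_t+\Delta$ has density $\varrho_g^\varphi(\cdot,t)$, uniformly bounded on $[0,T]$ by Proposition~\ref{prop:stim_uT}, piece (b) reduces to a PDE-level bound of order $\eps$ on $\|u_f(\cdot,t)-u_g(\cdot,t)\|$ and $\|T_f(\cdot,t)-T_g(\cdot,t)\|$ in a suitable norm on $\bb T^d$. This is obtained by subtracting the Duhamel identities for $f$ and $g$ and splitting the source for $h=f-g$ into the Lipschitz part $\varrho_fM_f-\varrho_gM_g$, controlled through the explicit Gaussian form \eqref{max} and the uniform $(u,T)$-bounds of Propositions~\ref{prop:bgk} and \ref{prop:stim_uT}, and a smoothing residual $\varrho_gM_g-\varrho_g^\varphi M_g^\varphi$ that is $O(\eps)$ pointwise, where the $\eps$-gain uses \eqref{grad}; a Gronwall loop on a weighted moment norm of the hydrodynamical differences then delivers the required bound. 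This PDE comparison, akin to the one used in \cite{BHP} to remove the cutoff, is the main technical obstacle, because the $\eps$-gain must be pushed through the nonlinear ratios defining $u_f,T_f,u_g,T_g$ while keeping the constants independent of $\eps$. Assembling all the pieces yields $D'(t)\le C_TD(t)+C_T\eps$ on $[0,T]$ with $D(0)=0$, and Gronwall finally gives $D(t)\le C_T\eps$, which proves \eqref{conveps}.
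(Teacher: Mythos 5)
Your coupling is the same one the paper uses: a common Poisson clock, a common jump $\Delta\sim\varphi$ applied only to the $Y$-particle, and the optimal Gaussian transport for the outgoing velocities. The difference lies entirely in how you control the post-jump velocity discrepancy $\mb E\big[|u_f(X_t,t)-u_g^\varphi(Y_t+\Delta,t)|^2 + d(\sqrt{T_f}-\sqrt{T_g^\varphi})^2\big]$. You split it into pieces (a)+(b)+(c), and your piece (b) --- a PDE-level $O(\eps)$ comparison of the hydrodynamic fields $u_f,T_f$ versus $u_g,T_g$ obtained by a Duhamel/Gr\"onwall argument on $f-g$ --- is exactly the kind of input the paper deliberately avoids. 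The paper's key observation is structural: because each jump resamples the velocity from the Maxwellian attached to the nonlinear process itself, the conditional law $R^\varphi(\rmd v\,\rmd w\,|\,x,y;t)$ of the post-jump velocities given the positions is automatically a \emph{coupling} of $f(x,\cdot)/\varrho_f(x)$ and $g^\varphi(y,\cdot)/\varrho_g^\varphi(y)$. Taking this as the transport plan $\Lambda_{x,y}$ in the variational bound $|u_f-u_g^\varphi|^2\le\int\rmd\Lambda_{x,y}\,|v-w|^2$ gives directly $\bar S_1\le\int\rmd R\,|v-w|^2\le I(t)$ (and similarly for the temperature term), which closes the Gr\"onwall loop at the process level with no need for any separate $\|u_f-u_g\|=O(\eps)$ estimate. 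Your route is essentially the \cite{BHP} cutoff-removal adapted to Wasserstein; it would work provided the PDE comparison is established, but that step is itself a Gr\"onwall of comparable difficulty to the whole theorem (and as phrased, a ``loop on the hydrodynamical differences'' alone will not close --- the Duhamel source for $f-g$ involves $f-g$ in full, so you must Gr\"onwall on a weighted norm of $f-g$ and then extract the field differences using the lower bounds on $\varrho_f,T_f$). The paper's argument trades that entire PDE lemma for the one observation above, which is what makes its proof self-contained and ``on the processes'' rather than ``on the equations.''
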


We are now in position to formulate the main result.

\begin{theorem}
\label{teo:main1}
Under the hypotheses of Theorem \ref{teo:main} and Theorem \ref{teo:lim}, suppose that $\eps$ vanishes gently when $N$ diverges (for instance $\eps=(\log N)^{-\mu}$ with $\mu$ sufficiently small). Then, for all integer positive $j$,
\begin{equation}
\label{conv}
\lim_{N\to \infty} \mc W_2 ( f^N_j(t), f(t) ^{\otimes j} ) = 0 \quad \forall\, t\in [0,T]\,. 
\end{equation}
\end{theorem}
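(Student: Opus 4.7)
The plan is to combine Theorems \ref{teo:main} and \ref{teo:lim} via the triangle inequality for $\mc W_2$, together with a diagonal choice of the smearing scale $\eps=\eps(N)$ that forces the cutoff to disappear and the particle error to vanish simultaneously.

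First, for each fixed $\eps>0$ I would insert the regularized solution $g=g^{(\eps)}$ of Eq.~\eqref{eq:kin} with $\varphi=\varphi^{(\eps)}$ between $f_j^N(t)$ and $f(t)^{\otimes j}$, and use the triangle inequality
\[
\mc W_2\bigl(f_j^N(t),f(t)^{\otimes j}\bigr) \le \mc W_2\bigl(f_j^N(t),g(t)^{\otimes j}\bigr) + \mc W_2\bigl(g(t)^{\otimes j},f(t)^{\otimes j}\bigr).
\]
The second summand is handled by the tensorization property of $\mc W_2$: taking $j$ independent copies of an optimal coupling between $g(t)$ and $f(t)$ yields, thanks to Theorem \ref{teo:lim},
\[
\mc W_2\bigl(g(t)^{\otimes j},f(t)^{\otimes j}\bigr)^2 \le j\,\mc W_2\bigl(g(t),f(t)\bigr)^2 \le j\,C_T\,\eps.
\]
The first summand is controlled directly by Theorem \ref{teo:main}: for every $N>N_\varphi$ and $t\in[0,T]$,
\[
\mc W_2\bigl(f_j^N(t),g(t)^{\otimes j}\bigr)^2 \le \frac{jL_T}{N^{1/4}}\exp(L_T\Gamma_\varphi).
\]

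Next, I would make the $\eps$-dependence of $\Gamma_\varphi$ and $N_\varphi$ explicit through the scaling \eqref{scaphi}. Inspection of the definition (see the forthcoming Eq.~\eqref{N0Gammaphi}) should show that both quantities are polynomial in $\eps^{-1}$, say $\Gamma_\varphi \le C_*\eps^{-p}$ and $N_\varphi \le C_*\eps^{-q}$ for some $p,q>0$ depending only on $d$ and $\bar\varphi$; this is plausible since the only $\eps$-dependent inputs entering the estimates of Sections \ref{sec:3}--\ref{sec:5} are $\varphi_0\simeq \eps^{-d}$ and $\|\nabla\varphi\|_\infty\simeq \eps^{-d-1}$, while the lower bounds $\varrho_g^\varphi\ge C_2\rme^{-t}$ and $T_g^\varphi\ge A_t$ from Proposition \ref{prop:stim_uT} are $\eps$-independent. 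With this input, choosing $\eps=(\log N)^{-\mu}$ for $0<\mu<1/p$ gives
\[
\frac{1}{N^{1/4}}\exp(L_T\Gamma_\varphi) \le \frac{1}{N^{1/4}}\exp\bigl(L_T C_*(\log N)^{p\mu}\bigr) \longrightarrow 0
\]
as $N\to\infty$, since the exponential of a fractional power of $\log N$ grows more slowly than any positive power of $N$. Simultaneously $\eps\to 0$ makes the contribution $jC_T\eps$ vanish, while the constraint $N>N_\varphi\le C_*(\log N)^{q\mu}$ is automatically satisfied for $N$ large. Collecting the two bounds through the triangle inequality yields \eqref{conv}.

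The main obstacle is the bookkeeping in the second paragraph: retracing the proofs of Sections \ref{sec:3}--\ref{sec:5} to certify that $\Gamma_\varphi$ admits a polynomial bound in $\eps^{-1}$ and to identify the exponent $p$. Once $p$ is pinned down, the choice of $\mu$ is forced and the diagonal argument is immediate. A minor technical point is the uniformity on $[0,T]$: both estimates are already uniform in $t\in[0,T]$, so no additional work is needed after the two bounds are combined.
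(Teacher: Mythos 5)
Your proposal is correct and follows essentially the same route as the paper: triangle inequality through $g(t)^{\otimes j}$, tensorization $\mc W_2(g^{\otimes j},f^{\otimes j})^2\le j\,\mc W_2(g,f)^2$ combined with Theorem \ref{teo:lim}, then the observation that $\Gamma_\varphi$ and $N_\varphi$ from Eq.~\eqref{N0Gammaphi} grow polynomially in $\eps^{-1}$ under the scaling \eqref{scaphi} (with $r\simeq\eps$ forced by Eq.~\eqref{r0}, a dependence you tacitly fold into $C_*$), so that $\eps=(\log N)^{-\mu}$ with $\mu$ small both kills the $\exp(L_T\Gamma_\varphi)$ factor against $N^{-1/4}$ and guarantees $N>N_\varphi$ eventually. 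The paper's proof is a condensed version of exactly this diagonal argument.
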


Some comments are in order. Theorem \ref{teo:main1} is actually a corollary of Theorem \ref{teo:main} and Theorem \ref{teo:lim} via a triangular inequality. The short proof will be presented at the end of Section \ref{sec:6}.

The convergence expressed in Theorem \ref {teo:main1} is very slow. In particular, choosing $\eps=(\log N)^{-\mu}$ with $\mu$ sufficiently small we obtain
\[
\mc W_2(f^N_j(t), f(t) ^{\otimes j})^2 \le \mathrm{const} \; j (\log N)^{-\mu} \,.
\]
The condition on $\eps$ is due to the fact that (when $\varphi=\varphi^{(\eps)}$) Eq.~\eqref{w2s} holds with $\Gamma_\varphi  \approx \eps^{-a}$ and $N_\varphi \approx \eps^{-b}$ for suitable $a,b>1$, so that the condition $N>N_\varphi$ is satisfied and the divergence $\exp\big(C\eps^{-a}\big)$ appearing in the right-hand side is compensated by the term $N^{-1/4}$. We did not tried to optimize further our estimates since an effort in this direction would not improve so much the result. A similar feature is also present in \cite{BHP}, where the physically reasonable scaling is discussed in Section \ref{sec:5}.

\section{Reformulation of the problem}
\label{sec:3}

Following the strategy developed in \cite{BHP}, we prove Theorem~\ref{teo:main} by showing that the stochastic particle system is close (as $N\to +\infty$) to an auxiliary process, whose asymptotic as $N \to \infty$ is obvious.

\subsection{Coupling with an independent process}
\label{sec:3.1}

The auxiliary process is denoted by $\Sigma_N(t) = (Y_N(t),W_N(t)) \in (\bb T^d)^N\times (\bb R^d)^N$ and it is defined according to the following construction.

Let $g=g(t) = g(x,v,t)$ be as in Proposition \ref{prop:stim_uT}. Denote by $(x(t),v(t))\in \bb T^d\times \bb R^d$ the one-particle jump process whose generator is given by
\begin{equation}
\label{nonlin}
\mc L_1^g\psi(x,v) = [(v\cdot \nabla_x )\psi](x,v) + \int\! \rmd \tilde x \, \varphi(\tilde x - x) \left[\int\! \rmd \tilde v\, M_g^\varphi(\tilde x,\tilde v) \psi(\tilde x,\tilde v) - \psi(x,v) \right]\,,
\end{equation}
where $\psi$ is a test function and $M_g^\varphi$ is defined in Eq.~\eqref{eq:maxg}. In particular, if the initial distribution has a density, the same holds at any positive time and the probability density of $(x(t),v(t))$ solves the regularized BGK equation \eqref{eq:kin}. This kind of process is usually called non-linear since its generator is implicitly defined through the law of the process itself.

The process $\Sigma_N(t)$ is then defined by $N$ independent copies of the above process, i.e., as the Markov process on $(\bb T^d)^N\times (\bb R^d)^N$ with generator
\begin{align}
\label{genf}
\mc L_N^g{G}(Z_N) & = [(V_N\cdot\nabla_{X_N} ){G}] (Z_N) + \sum _{i=1}^N \int\! \rmd \tilde x_i \, \varphi(\tilde x_i - x_i) \nonumber \\ & \quad \times \left[ \int\!\rmd \tilde v_i\, M_g^\varphi (\tilde x_i,\tilde v_i) G(Z_N^{i,(\tilde x_i,\tilde v_i)}) - G(Z_N) \right] \,.
\end{align}
Note that the only difference with respect to Eq.~\eqref{gen} is the replacement of $M_{Z_N}^\varphi$ by $M^\varphi_g$.

As in \cite{BHP}, the closeness of $Z_N(t)$ and $\Sigma_N(t)$ is proved by introducing a suitable coupled process $Q_N(t) = (Z_N(t),\Sigma_N(t))$. More precisely, the coupled process is the Markov process whose generator $\mc L_Q$ is defined in the following way. We let $Z_N=(X_N,V_N)$, $\Sigma_N=(Y_N,W_N)$, with $X_N=(x_1, \dots, x_N)$, $V_N=(v_1, \dots, v_N)$, $Y_N=(y_1, \dots, y_N)$, and $W_N=(w_1, \dots, w_N)$. Then, for any test function $G=G(Z_N,\Sigma_N)$,
\begin{align}
\label{genQ}
& \mc L_Q G(Z_N,\Sigma_N) = [(V_N\cdot\nabla_{X_N} + W_N\cdot\nabla_{Y_N}) G] (Z_N,\Sigma_N) \nonumber +\int\! \rmd \tilde x_i\, \rmd \tilde y_i \, \Phi_{x_i,y_i}(\tilde x_i,\tilde y_i) \nonumber \\ & \qquad \times \left[ \int\! \rmd \tilde v_i\, \rmd \tilde w_i \,\mc M^\varphi (\tilde x_i,\tilde v_i;\tilde y_i,\tilde w_i) {G} (Z_N^{i,(\tilde x_i,\tilde v_i)},\Sigma_N^{i,(\tilde y_i,\tilde w_i)}) - G(Z_N,\Sigma_N) \right]\,,
\end{align}
In Eq.~\eqref{genQ}, for given $\tilde x,\tilde y\in \bb T^d$, $\mc M^\varphi(\tilde x,v;\tilde y,w)$ is the joint representation of the Maxwellians $M_{Z_N}^\varphi(\tilde x,v)$ and $M_g^\varphi(\tilde y,w)$ that realizes the 2-Wasserstein distance between the marginals, whose square is given by (see, e.g., \cite{OP})
\begin{equation}
\label{w2m}
\mc W_2\big(M_{Z_N}^\varphi(x,\cdot),M_g^\varphi(y,\cdot)\big)^2 = |u_N^\varphi(x)-u_g^\varphi(y)|^2 + d\Big(\sqrt{T_N^\varphi(x)} - \sqrt{T_g^\varphi(y)}\Big)^2\,.
\end{equation}
While, for given $x,y\in \bb T^d$, $\Phi_{x,y}(\tilde x,\tilde y)$ is the joint representation of the probability densities $\varphi_x(\tilde x) = \varphi(\tilde x-x)$ and  $\varphi_y(\tilde y) = \varphi(\tilde y-y)$ defined as
\begin{equation}
\label{jointxy0}
\Phi_{x,y}(\tilde x,\tilde y) = \varphi_x(\tilde x) \delta(\tilde x -x - \tilde y +y)\,,
\end{equation}
where $\delta(z)$ denotes the Dirac measure on $\bb T^d$ centered in $z=0$. In particular, for any integrable function $J$ on $\bb T^d$,
\begin{equation}
\label{jointxy}
\int\! \rmd\tilde x\, \rmd\tilde y\, \Phi_{x,y}(\tilde x,\tilde y) J(\tilde x-\tilde y) = J(x-y)\,.
\end{equation}

In words, the coupling is given by the Markov process in which at a random exponential time of intensity one, the $i$-th pair of particles makes the jump from $(x_i,v_i,y_i,w_i)$ to $(\tilde x_i,\tilde v_i, \tilde y_i, \tilde w_i) = (x_i+\xi,\tilde v_i, y_i+\xi, \tilde w_i)$, where $\xi$ is distributed according to $\varphi$, and $(\tilde v_i, \tilde w_i)$ according to the prescribed joint representation of $M^\varphi_{Z_N}(\tilde x_i, \cdot)$ and $M^\varphi_g(\tilde y_i,\cdot)$. We denote by $\rmd R_N(t) = \rmd R_N(Z_N,\Sigma_N,t)$ the law of $Q_N(t)$ and assume that, initially,
\[
\rmd R_N(0) = \delta(X_N-Y_N) \delta(V_N-W_N) f_0^{\otimes N}(X_N,V_N) \,\rmd Z_N\, \rmd \Sigma_N\,.
\]
In particular, recalling the notation introduced in Remark \ref{rem:wass},  
\[
\rmd R_N(t) \in \mc P\big(F_N(t)\rmd Z_N, g(t)^{\otimes N}\rmd \Sigma_N\big)\,.
\]

\subsection{Estimating the distance between the processes}
\label{sec:3.2}

We adopt the same strategy of \cite{BHP} and introduce the quantity
\[
I_N(t) := \int\! \rmd R_N(t)\, (|x_1-y_1|^2 + |v_1-w_1 |^2)\,.
\]
As $\rmd R_N(t)$ is symmetric with respect to particle permutations we have 
\[
I_N(t) = \frac 1j \int\! \rmd R_N(t)\, \sum_{i=1}^j (|x_i-y_i|^2 + |v_i-w_i |^2) \quad \forall\, j\in\{1,\ldots,N\}\,,
\] 
so that 
\[
\mc W_2\big(f_j^N(t),g(t)^{\otimes j}\big) \le \sqrt{j I_N(t)}  \quad \forall\, j\in\{1,\ldots,N\}\,,
\]
by the definition of the 2-Wasserstein distance. Therefore, the proof of Theorem \ref{teo:main} reduces to show that for each $T>0$ there exists $L_T = L_T(f_0)$ such that,
\begin{equation}
\label{I_N}
I_N(t) \le \frac{L_T}{N^{1/4} } \exp (L_T\Gamma_\varphi) \quad \forall\, t\in [0,T] \quad \forall\, N>N_\varphi\,,
\end{equation}
for suitable $\Gamma_\varphi$ and $N_\varphi$. To this end, we compute
\begin{align*}
\dot I_N(t)  & =  \int\! \rmd R_N(t)\, \mc L_Q (|x_1-y_1|^2 + |v_1-w_1 |^2) \\ & =  \int\! \rmd R_N(t)\, (v_1 \cdot \nabla_{x_1} + w_1 \cdot \nabla_{y_1})  |x_1-y_1|^2 \\ & \quad - N \int\! \rmd R_N(t)\,  ( |x_1-y_1|^2 + |v_1-w_1|^2) \\ &  \quad + \sum_{i=2}^N \int\! \rmd R_N(t)\, (|x_1-y_1|^2 + |v_1-w_1|^2) + \int\! \rmd R_N(t)\, |x_1-y_1|^2 \\ & \quad + \int\! \rmd R_N(t) \int\! \rmd \xi\, \varphi(\xi) \int\! \rmd \tilde v_1\, \rmd \tilde w_1\, \mc M^\varphi(x_1+\xi, \tilde v_1; y_1+\xi, \tilde w_1) |\tilde v_1- \tilde w_1|^2\,,
\end{align*}
where the first two terms in the right-hand side arise from the stream part ($V_N\cdot\nabla_{X_N}G + W_N\cdot\nabla_{Y_N}G$) and the loss part ($-NG$) of the generator $\mc L_Q$, respectively. We note that the loss term is partially compensated by the third term, while the stream part is equal to
\[
2 \int\! \rmd R_N(t)\, (v_1-w_1) \cdot (x_1-y_1) \le \int\! \rmd R_N(t)\, (|x_1-y_1|^2+|v_1-w_1 |^2)\,,
\]
where, with an abuse of notation, in the left-hand side we denote by $(x_1-y_1)$  a vector $\eta \in \bb R^d$ in the equivalence class defined by $x_1-y_1\in \bb T^d= \big(\bb R/(\frac 12 + \bb Z)\big)^d$ with $|\eta|=|x_1-y_1|$ and, when not uniquely determined by these conditions, with the minimum value of $(v_1-w_1)\cdot \eta$ (however, this is an event of vanishing measure and will not play any role in the sequel). Finally, the last term is given by Eq.~\eqref{w2m}. Therefore
\begin{equation}
\label{I<1}
\dot I_N(t) \le I_N(t) + \int\! \rmd R_N(t)\, D(Z_N,\Sigma_N)\,,
\end{equation}
with
\begin{align}
\label{w2}
D(Z_N,\Sigma_N) & = \int\! \rmd \xi\, \varphi(\xi) |u_N^\varphi(x_1+\xi)-u_g^\varphi(y_1+\xi)|^2 \nonumber \\ & \quad + \int\! \rmd \xi\, \varphi(\xi)\, d\Big(\sqrt{T_N^\varphi(x_1+\xi)} - \sqrt{T_g^\varphi(y_1+\xi)}\Big)^2.
\end{align}
Our goal is to estimate $\int\! \rmd R_N(t)\, D(Z_N,\Sigma_N)$ from above with a constant (independent of $N$) multiple of $I_N(t)$ plus a small term of order $1/N^{1/4}$, so that Eq.~\eqref{I_N} follows from Gr\"{o}nwall's inequality.

As noticed in \cite{BHP}, in estimating $D(Z_N,\Sigma_N)$ it is useful to replace $\varrho_g^\varphi$, $u_g^\varphi$, $T_g^\varphi$ by
\begin{equation}
\label{empf2}
\begin{split}
& \tilde \varrho_N^\varphi(x) = \frac 1N \sum_{j=1}^N \varphi (x-y_j)\,, \qquad  \tilde \varrho_N^\varphi \tilde u_N^\varphi(x) = \frac 1N \sum_{j=1}^N \varphi (x-y_j) w_j\,, \\ & \tilde \varrho_N^\varphi \tilde T_N^\varphi(x) = \frac 1{Nd} \sum_{j=1}^N \varphi (x-y_j) |w_j - \tilde u_N^\varphi(x)|^2 \,,
\end{split}
\end{equation}
i.e., the empirical fields constructed via the variables $Y_N=(y_1, \dots, y_N) $ and $W_N=(w_1, \dots, w_N)$, distributed independently according to $g(t)^{\otimes N}$. By the law of large numbers, this replacement is expected to be small for large $N$. 

In the present case, the function $\varphi$ has compact support, so that there are particle configurations for which the smeared empirical densities defined in Eqs.~\eqref{empf} and \eqref{empf2} assume very small values (order $1/N$). This makes impossible to obtain (as in \cite{BHP}) a point-wise estimate of $D(Z_N,\Sigma_N)$. To overcome this difficulty, we decompose the phase space as the union of a ``good set'' $\mc G$, which will be defined in the next section, and its complement, the ``bad set'' $\mc G^\complement$. Roughly speaking, in the set $\mc G$, $D(Z_N,\Sigma_N)$ can be controlled similarly to what done in \cite{BHP}, while the contribution to $\dot I_N(t)$ coming from the integration on $\mc G^c$ will be treated by suitable probability estimates (actually, the decomposition of $\dot I_N(t)$ is more involved, as explained at the beginning of Section \ref{sec:5}).

\smallskip
\noindent\textbf{A notation warning.} In what follows, we shall denote by $C$ a generic positive constant whose numerical value may change from line to line and it may possibly depend on the fixed time $T$ and the initial condition $f_0$.

Furthermore, we will use both the notations $\id_B$ and $\id(B)$ to denote the characteristic function of the set $B$. We shall also use the shorten notation $\rmd g(t)^{\otimes N}$ to denote integration with respect to $ \rmd \Sigma_N\, g(t)^{\otimes N}$.

\section{Preliminary estimates}
\label{sec:4}

Recalling the assumptions Eq.~\eqref{eq:varphi} on $\varphi$, we fix $r\in (0,\frac{1}{10})$, with $r^{-1}\in \bb N$ and such that
\begin{equation}
\label{r0}
\varphi(x) > \frac{\varphi_0}2 \quad \forall\, x\in [-5r,5r]^d\,,
\end{equation}

Denote by $\{\Delta\}$ a partition of $\bb T^d$ into square boxes of side $r$. As a consequence, we have the following lower bound on the empirical densities,
\begin{equation}
\label{Ny}
N \tilde \varrho_N^\varphi(x) \ge \frac{\varphi_0}2 N_\Delta^Y  \quad \text{ if } x\in \Delta\,,
\end{equation}
where $N_\Delta^Y$ is the number of particles of the configuration $Y_N$ contained in $\Delta$.

\begin{lemma}
\label{lem:ny}
Given $T>0$ there is $A>0$ (depending only on $T$ and initial condition $f_0$) such that if
\begin{equation}
\label{ba}
\mc B_A := \{(Z_N,\Sigma_N) \colon \tilde \varrho_N^\varphi(x) > Ar^d\varphi_0 \;\; \forall\, x\in \bb T^d\}
\end{equation}
then
\begin{equation}
\label{eq:densy}
\int\! \rmd R_N(t) \id_{\mc B_A^\complement} \le \frac{C}{r^{3d}N} \quad \forall\, t\in [0,T]\,.
\end{equation}
\end{lemma}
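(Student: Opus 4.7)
The plan is to exploit the fact that the event $\mc B_A$ depends only on $Y_N$, and that the marginal of $\rmd R_N(t)$ on $\Sigma_N$ is exactly $g(t)^{\otimes N}$. Thus
\[
\int\! \rmd R_N(t)\, \id_{\mc B_A^\complement} = \bb P_{g(t)^{\otimes N}}(\mc B_A^\complement),
\]
and everything reduces to a concentration bound for the smeared empirical density associated to $N$ i.i.d.\ samples from $g(t)$.

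First I would use the partition $\{\Delta\}$ of $\bb T^d$ into $r^{-d}$ cubes of side $r$ introduced before the statement, and set $N_\Delta^Y := \#\{j\colon y_j\in\Delta\}$. The key geometric observation is inequality~\eqref{Ny}: if $x\in\Delta$, then every $y_j\in\Delta$ satisfies $|x-y_j|\le r\sqrt d <5r$, so \eqref{r0} gives $\varphi(x-y_j)>\varphi_0/2$, whence $\tilde\varrho_N^\varphi(x)\ge \frac{\varphi_0}{2N}N_\Delta^Y$. Consequently
\[
\mc B_A^\complement \subseteq \bigcup_\Delta \{N_\Delta^Y \le 2Ar^d N\},
\]
and a union bound reduces the proof to estimating $\bb P(N_\Delta^Y \le 2Ar^d N)$ for a single box.

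The variable $N_\Delta^Y$ is a sum of $N$ i.i.d.\ Bernoulli random variables with parameter $p_\Delta(t)=\int_\Delta\!\rmd x\, \varrho_g(x,t)$. Proposition~\ref{prop:stim_uT}, in the form of \eqref{stimrho}, gives $\varrho_g(x,t)\ge C_2\rme^{-t}$, hence $\bb E[N_\Delta^Y]\ge NC_2\rme^{-T}r^d$ uniformly in $t\in[0,T]$. I would then fix $A$ (depending only on $T$ and $f_0$) small enough that $2Ar^d N\le \tfrac12\bb E[N_\Delta^Y]$, e.g.\ $A=C_2\rme^{-T}/4$, and apply Chebyshev's inequality:
\[
\bb P(N_\Delta^Y \le 2Ar^d N) \le \frac{4\,\mathrm{Var}(N_\Delta^Y)}{\bb E[N_\Delta^Y]^2} \le \frac{4}{Np_\Delta(t)} \le \frac{C}{Nr^d}.
\]
Summing over the $r^{-d}$ boxes yields a bound of order $1/(Nr^{2d})$, which is easily absorbed into the stated $C/(r^{3d}N)$.

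I do not expect any serious obstacle here: the argument is a second-moment concentration estimate, and uniformity in $t\in[0,T]$ is inherited from the time-uniform hydrodynamic bounds of Proposition~\ref{prop:stim_uT}. The only mildly delicate point is the choice of $A$ independent of $r$: this works precisely because the expected number of particles per box scales as $r^d$, matching the threshold $Ar^d\varphi_0$ that defines $\mc B_A$. Had the threshold been of different order in $r$, a fixed $A$ would not suffice to push the deficit event into the tail of the binomial.
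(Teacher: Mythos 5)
Your proposal follows essentially the same route as the paper: reduce to the marginal $g(t)^{\otimes N}$, use the cube partition together with Eq.~\eqref{Ny} to dominate the bad event by a union of underfill events for individual boxes, pick $A=C_2\rme^{-T}/4$ using the lower bound \eqref{stimrho}, and finish with Chebyshev plus a union bound over $r^{-d}$ boxes. The only cosmetic difference is that you bound $\mathrm{Var}(N_\Delta^Y)\le Np_\Delta$ and so obtain the slightly sharper $C/(Nr^{2d})$, whereas the paper bounds the indicator variance crudely by a constant and lands on $C/(Nr^{3d})$; both are consistent with the stated estimate.
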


\begin{proof}
By  Eq.~\eqref{Ny},
\[
\begin{split}
\int\! \rmd R_N(t) \id_{\mc B_A^\complement}  &= \int\! \rmd g(t)^{\otimes N}\,\id\big( \{Y_N\colon \exists\, x \in \bb T^d \text{ s.t. }\tilde \varrho_N^\varphi(x) \le Ar^d\varphi_0\}\big) \\ & \le \int\! \rmd g(t)^{\otimes N}\,  \id\big(\{Y_N\colon  \exists\,  \Delta \text{ s.t. } N_\Delta^Y \le 2Ar^dN\}\big)  \\ & \le \sum_{\{\Delta\}} \int\! \rmd  g(t)^{\otimes N}\,  \id_{N_\Delta^Y \le 2Ar^dN} \le \frac{1}{r^d} \max_{\Delta} \int\! \rmd g(t)^{\otimes N}\,  \id_{N_\Delta^Y \le 2Ar^dN}\,.
\end{split}
\]
We observe that $N_\Delta^Y = N \xi_N$ with $\xi_N = N^{-1} \sum_{j=1}^N \id_{y_j\in \Delta}$ the arithmetic mean of $N$ i.i.d.~random variables whose common expected value is
\[
 \bb E\xi_N = \bb E \id_{y_1\in \Delta} =  \int_\Delta\!\rmd y\, \varrho_g(x,t) \ge  C_2\rme^{-T}r^d \qquad \forall\,t\in [0,T]\,,
\]
having used Eq.~\eqref{stimrho} in the last inequality. We then choose $A=C_2\rme^{-T}/4$, whence 
\[
\id_{N_\Delta^Y \le 2Ar^dN} =  \id_{\xi_N \le 2Ar^d} \le \id_{|\xi_N- \bb E\xi_N| \ge \bb E\xi_N/2} \le  \id_{|\xi_N- \bb E\xi_N| \ge C_2\rme^{-T}r^dN/4}\,.
\]
Therefore, by Chebyshev's inequality,
\[
\int\! \rmd \Sigma_N\,  g(t)^{\otimes N}\,  \id_{N_\Delta^Y \le 2Ar^dN}  \le  \frac{16\rme^{2T}}{C_2^2 r^{2d}N} \bb E(\id_{y_1\in \Delta}- \bb E\id_{y_1\in \Delta})^2 \le \frac{C}{r^{2d}N}\,.
\]
Eq.~\eqref{eq:densy} is thus proved.
\end{proof}

\begin{lemma}[The good set]
\label{lem:nx}
Given $A>0$ as in Lemma \ref{lem:ny}, we let
\begin{equation}
\label{afi}
A_\varphi = \frac{Ar^d\varphi_0}{2\|\nabla\varphi\|_\infty}
\end{equation}
and define
\begin{equation}
\label{g}
\mc G := \mc G_1 \cap \mc B_A \;\text{ with }\; \mc G_1 := \bigg\{(Z_N,\Sigma_N) \colon \frac 1N\sum_{j=1}^N |x_j-y_j| \le  A_\varphi \bigg\}\,.
\end{equation}
Then
\begin{equation}
\label{stimdens}
\varrho_N^\varphi(x) > \frac{Ar^d\varphi_0}2\,, \quad \tilde \varrho_N^\varphi(x) > Ar^d\varphi_0 \quad \forall\, x\in \bb T^d \quad \text{ in the set }\mc G.
\end{equation}
\end{lemma}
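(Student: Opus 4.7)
The statement has two parts. The bound $\tilde\varrho_N^\varphi(x) > Ar^d\varphi_0$ is immediate: in $\mc G$ we are in $\mc B_A$ by construction, and this is exactly the defining property of $\mc B_A$ given in Eq.~\eqref{ba}. So the only real content is the corresponding lower bound on $\varrho_N^\varphi(x)$, and the idea is to transfer the bound on $\tilde\varrho_N^\varphi$ to $\varrho_N^\varphi$ via a Lipschitz estimate enabled by the condition defining $\mc G_1$.

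The plan is to write, for every $x\in \bb T^d$,
\[
\varrho_N^\varphi(x) - \tilde\varrho_N^\varphi(x) = \frac 1N \sum_{j=1}^N \bigl[\varphi(x-x_j) - \varphi(x-y_j)\bigr],
\]
and bound each summand using that $\varphi\in C^\infty(\bb T^d)$, hence globally Lipschitz on the torus with constant $\|\nabla\varphi\|_\infty$. This yields
\[
|\varrho_N^\varphi(x) - \tilde\varrho_N^\varphi(x)| \le \frac{\|\nabla\varphi\|_\infty}{N} \sum_{j=1}^N |x_j-y_j|\,.
\]
In the set $\mc G_1$ the right-hand side is at most $\|\nabla\varphi\|_\infty A_\varphi$, which by the choice Eq.~\eqref{afi} of $A_\varphi$ equals precisely $Ar^d\varphi_0/2$. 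Combining this with the already established inequality $\tilde\varrho_N^\varphi(x) > Ar^d\varphi_0$ (valid on $\mc B_A$) gives
\[
\varrho_N^\varphi(x) \ge \tilde\varrho_N^\varphi(x) - \frac{Ar^d\varphi_0}{2} > Ar^d\varphi_0 - \frac{Ar^d\varphi_0}{2} = \frac{Ar^d\varphi_0}{2}\,,
\]
which is exactly Eq.~\eqref{stimdens}.

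There is essentially no obstacle here: the lemma is a pure bookkeeping statement designed so that the constant $A_\varphi$ in Eq.~\eqref{afi} absorbs precisely half of the lower bound on $\tilde\varrho_N^\varphi$ after the Lipschitz transfer. The only minor care concerns the use of the torus distance $|x-y|$ in the Lipschitz bound, but this is built into the definition of $|\cdot|$ on $\bb T^d$ given in Section \ref{sec:2}, so $\varphi$ is Lipschitz with constant $\|\nabla\varphi\|_\infty$ in that metric. The role of this lemma, which should be emphasized only briefly, is that on $\mc G$ both smeared empirical densities are bounded below by a strictly positive constant (depending on $\varphi$ but not on $N$), so that the smeared empirical velocities and temperatures appearing in $D(Z_N,\Sigma_N)$ in Eq.~\eqref{w2} are well defined and can be controlled pointwise.
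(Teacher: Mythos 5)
Your proof is correct and follows exactly the paper's own argument: trivially inherit the bound on $\tilde\varrho_N^\varphi$ from $\mc B_A$, then transfer it to $\varrho_N^\varphi$ by the Lipschitz bound $|\varrho_N^\varphi(x)-\tilde\varrho_N^\varphi(x)|\le \frac{\|\nabla\varphi\|_\infty}{N}\sum_j|x_j-y_j|\le \|\nabla\varphi\|_\infty A_\varphi = Ar^d\varphi_0/2$ on $\mc G_1$. Nothing further to add.
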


\begin{proof}
The lower bound on $\tilde \varrho_N^\varphi(x)$ follows trivially from the definition of $\mc B_A$. Concerning the other bound, in the set $\mc G$ we have
\[
\begin{split}
\varrho_N^\varphi(x)  & \ge  \tilde \varrho_N^\varphi(x) - |\varrho_N^\varphi(x) - \tilde \varrho_N^\varphi(x)| \ge Ar^d\varphi_0 - \frac{\|\nabla\varphi\|_\infty}N \sum_{j=1}^N |x_j-y_j|\\ & \ge Ar^d\varphi_0 - \|\nabla\varphi\|_\infty A_\varphi = \frac{Ar^d\varphi_0}2\,,
\end{split}
\]
and the lemma is proved.
\end{proof}

\begin{lemma}
\label{lem:stimpij}
Define
\begin{align}
\label{pij}
p_{i,j} & = p_{i,j}(\xi) := \frac{\varphi(x_i+\xi-x_j)}{\sum_k \varphi(x_i+\xi-x_k)} = \frac{\varphi(x_i+\xi-x_j)}{N \varrho_N^\varphi(x_i+\xi)}\,, \\ \label{qij}
q_{i,j} & = q_{i,j}(\xi) := \frac{\varphi(y_i+\xi-y_j)}{\sum_k \varphi(y_i+\xi-y_k)} = \frac{\tilde \varphi(y_i+\xi-y_j)}{N \varrho_N^\varphi(y_i+\xi)}\,.
\end{align}
Then, recalling $\varphi_0=\max\varphi$,
\begin{align}
\label{pijstim}
& \sum_{j=1}^N p_{i,j} = 1\,, \qquad \int\!\rmd \xi\,\varphi(\xi)  \sum_{i=1}^N p_{i,j} \le \varphi_0\,, \\ \label{qijstim} &  \sum_{j=1}^N q_{i,j} = 1\,, \qquad \int\!\rmd \xi\,\varphi(\xi)  \sum_{i=1}^N q_{i,j} \le \varphi_0\,. 
\end{align}
\end{lemma}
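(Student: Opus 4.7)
The identities $\sum_{j=1}^N p_{i,j} = 1$ and $\sum_{j=1}^N q_{i,j}=1$ are immediate from the very definitions in Eqs.~\eqref{pij}--\eqref{qij}: summing the numerators over $j$ yields $\sum_{j}\varphi(x_i+\xi-x_j) = N\varrho_N^\varphi(x_i+\xi)$, which is exactly the denominator of $p_{i,j}$ (and similarly for $q_{i,j}$ with $y_k$'s). Whenever the denominator vanishes, the numerator $\varphi(x_i+\xi-x_j)$ vanishes as well (being one of the terms in the sum), so the ratio can be set to zero without ambiguity, and this convention causes no issue in the arguments below.

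For the two non-trivial inequalities, the idea is a change of variables to absorb the weight $\varphi(\xi)$ into the denominator. Concretely, for each fixed $i$ I would perform the translation $\zeta = x_i + \xi$ on the torus, which preserves Lebesgue measure, obtaining
\[
\int\!\rmd\xi\,\varphi(\xi)\, p_{i,j}(\xi) = \int\!\rmd\zeta\, \varphi(\zeta-x_i)\,\frac{\varphi(\zeta-x_j)}{N\varrho_N^\varphi(\zeta)}\,.
\]
Summing over $i$ and exchanging sum and integral, the numerator factor $\sum_{i=1}^N \varphi(\zeta-x_i)$ equals exactly $N\varrho_N^\varphi(\zeta)$, canceling the denominator on the set where the latter is positive (and on the complement the remaining factor $\varphi(\zeta-x_j)$ is zero, so the integrand is zero). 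Hence
\[
\int\!\rmd\xi\,\varphi(\xi) \sum_{i=1}^N p_{i,j}(\xi) = \int\!\rmd\zeta\, \varphi(\zeta-x_j) = 1 \le \varphi_0\,,
\]
using $\int\varphi = 1$ together with the normalization $\varphi_0\ge 1$ from Eq.~\eqref{eq:varphi}.

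The estimate for $q_{i,j}$ is literally the same with $y_k$ in place of $x_k$, since nothing in the argument uses anything specific about the configuration. I don't foresee any obstacle: the whole lemma is a bookkeeping computation, and its only real content is the translation invariance of the integral over the torus together with the identity $\sum_k\varphi(\zeta-x_k) = N\varrho_N^\varphi(\zeta)$.
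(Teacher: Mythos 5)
Your proof is correct, and it actually yields a sharper bound than the one stated. The paper's proof first bounds the numerator $\varphi(x_i+\xi-x_j)\le\varphi_0$ and then performs the change of variable $\xi'=x_i+\xi$, after which the sum over $i$ of $\varphi(\xi'-x_i)$ cancels the denominator $N\varrho_N^\varphi(\xi')$ exactly; this yields the stated bound $\varphi_0$. You instead postpone any bounding: you change variables first, keep the full numerator $\varphi(\zeta-x_j)$, and observe that summing $\varphi(\zeta-x_i)$ over $i$ cancels the denominator, leaving $\int\!\rmd\zeta\,\varphi(\zeta-x_j)=1$. So in fact
\[
\int\!\rmd\xi\,\varphi(\xi)\sum_{i=1}^N p_{i,j}(\xi)=1\le\varphi_0\,,
\]
with equality to $1$ rather than a mere upper bound by $\varphi_0$. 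Both proofs rest on the same two facts (translation invariance of the integral on $\bb T^d$ and the identity $\sum_i\varphi(\zeta-x_i)=N\varrho_N^\varphi(\zeta)$), but yours extracts more from them by delaying the estimate. The improvement, while pleasing, would not change the final convergence rate, since the quantity $\varphi_0$ enters the definition of $\Gamma_\varphi$ in Eq.~\eqref{N0Gammaphi} through other, irreducible contributions. Your remark about the convention when the denominator vanishes is also in order and matches the use made in the paper.
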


\begin{proof}
The proofs of Eq.~\eqref{pijstim} and \eqref{qijstim} are the same, let us consider the first one. The normalization property $\sum_{j=1}^N p_{i,j} = 1$ is obvious, while (with the change of variable $\xi' = x_i + \xi$)
\[
\begin{split}
\int\!\rmd \xi\, \varphi(\xi) \sum_{i=1}^N p_{i,j} & \le \varphi_0 \int\!\rmd \xi\, \varphi(\xi) \sum_{i=1}^N \frac{1}{N \varrho_N^\varphi(x_i+\xi)} \\ & = \varphi_0 \sum_{i=1}^N \int\!\rmd \xi'\, \frac{\varphi(\xi'-x_i)}{N \varrho_N^\varphi(\xi')} = \varphi_0 \int\!\rmd \xi'\, \frac{N \varrho_N^\varphi(\xi')}{N \varrho_N^\varphi(\xi')} =\varphi_0 \,. 
\end{split}
\]
(recall the volume of the torus $\bb T^d$ is one).
\end{proof}

\begin{lemma}
\label{lem:llnw}
Given $T>0$, for each $p\in \bb N$ there is $M$ (depending only on $T$, $p$, and initial condition $f_0$) such that the following holds.

(1) For any  $j=1,\ldots,N$ we have
\begin{equation}
\label{eq:stimv}
\int\!\rmd R_N(t) |w_j|^p = \int\!\rmd g^{\otimes N}(t) |w_j|^p = \int\! \rmd g(t)\,  |w|^p \le \frac M2 \quad \forall\, t\in [0,T]\,.
\end{equation}

(2) If
\begin{equation}
\label{eq:llnw}
\mc G_{M,p} := \bigg\{(Z_N,\Sigma_N) \colon \frac 1N\sum_{j=1}^N |w_j|^p \le M \bigg\}
\end{equation}
then
\begin{equation}
\label{eq:stimllnw}
\int\!\rmd R_N(t) \id_{\mc G_{M,p}^\complement} \le \frac CN \quad \forall\, t\in [0,T]\,. 
\end{equation}
\end{lemma}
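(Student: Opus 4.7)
The plan exploits the construction of the coupled process $Q_N(t)$ introduced in Section~\ref{sec:3.1}: by construction, the marginal of $R_N(t)$ on the $\Sigma_N$ coordinates coincides with $g(t)^{\otimes N}$, so every expectation depending only on the velocities $w_j$ reduces to a computation with $N$ i.i.d.~samples drawn from $g(t)$. Both statements then follow from standard moment/variance bounds for the one-particle distribution $g(t)$, which in turn are controlled through Proposition~\ref{prop:stim_uT}.

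For part~(1), the first two equalities are immediate from the i.i.d.~structure of the $w_j$'s under $g(t)^{\otimes N}$. For the final inequality, I would fix $q > p + d$ and use $\mc N_q(g(t)) \le K_{q,T}$ (Proposition~\ref{prop:stim_uT}) to estimate
\[
\int\! \rmd g(t)\, |w|^p = \int\! \rmd x\, \rmd v\, g(x,v,t)\,|v|^p \le K_{q,T}\int\! \rmd v\, \frac{|v|^p}{1+|v|^q}\,,
\]
the right-hand side being a finite constant depending only on $T$, $p$, $d$ and $f_0$. I then define $M$ to be twice the supremum over $t\in[0,T]$ of this upper bound, which fixes $M$ for the remainder of the proof.

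For part~(2), on the event $\mc G_{M,p}^\complement$ one has $\frac 1N \sum_j |w_j|^p > M \ge 2\,\mathbb E |w_1|^p$, so that $\frac 1N \sum_j\bigl(|w_j|^p - \mathbb E|w_j|^p\bigr) > M/2$. Since under $g(t)^{\otimes N}$ the $w_j$'s are i.i.d., Chebyshev's inequality yields
\[
\int\!\rmd R_N(t)\, \id_{\mc G_{M,p}^\complement} \le \frac{4}{M^2 N^2}\sum_{j=1}^N \mathrm{Var}(|w_j|^p) \le \frac{4\,\mathbb E|w_1|^{2p}}{M^2\, N}\,,
\]
and a second application of $\mc N_q(g(t)) \le K_{q,T}$ with $q > 2p + d$ bounds $\mathbb E|w_1|^{2p}$ uniformly on $[0,T]$, giving the desired $O(1/N)$ estimate.

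There is no real obstacle: the lemma is a quantitative law-of-large-numbers statement for empirical velocity moments, and the only thing to watch is taking $q$ large enough (namely $q > 2p + d$) so that both the mean and the variance of $|w_1|^p$ are finite and uniformly bounded on $[0,T]$ via the a priori estimates already established in Proposition~\ref{prop:stim_uT}.
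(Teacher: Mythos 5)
Your proposal is correct and follows essentially the same route as the paper: the marginal of $R_N(t)$ in the $\Sigma_N$ coordinates is $g(t)^{\otimes N}$, part (1) is bounded via $\mc N_q(g(t))\le K_{q,T}$ from Proposition~\ref{prop:stim_uT}, and part (2) is Chebyshev's inequality applied to the i.i.d.\ sum. You merely spell out the choice of $q$ (namely $q>2p+d$) more explicitly than the paper, which simply invokes ``the law of large numbers (i.e., Chebyshev's inequality).''
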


\begin{proof}
From the estimate on $\mc N_q(g)$ in Eq.~\eqref{eq:utK}, there is $M=M(T,p,f_0)$ such that $\int\!\rmd y\, \rmd w\, g(y,w,t) |w|^p \le M/2$ for any $t\in [0,T]$, which  proves Eq.~\eqref{eq:stimv}. Moreover, letting $\xi_N= \frac 1N\sum_j|w_j|^p$ and $\bb E(\xi_N) = \int\! \rmd g(t)^{\otimes N} \xi_N$, we have
\[
\int\!\rmd R_N(t) \id_{\mc G_{M,p}^\complement}  = \int_{{\sum}_j |w_j|^p > MN}\! \rmd g(t)^{\otimes N} \le \int_{|\xi_N-\bb E(\xi_N)|>M/2}\! \rmd g(t)^{\otimes N} \,,
\]
whence Eq.~\eqref{eq:stimllnw} follows from the law of large numbers (i.e., Chebyshev's inequality).
\end{proof}

\section{Proofs}
\label{sec:5}

We deduce an upper bound for the quantity $D(Z_N,\Sigma_N)$ introduced in Eq.~\eqref{w2}, which is the sum of several terms. To estimate the expectation of some of them, a separate analysis on the good set and its complement will be necessary. To this purpose, we first introduce the ``mixed temperature''
\[
\bar T^\varphi_N(x_1+\xi,y_1+\xi) = \frac 1d \sum_{j=1}^N p_{1,j} |w_j-\tilde u_N^\varphi(y_1+\xi)|^2\,.
\]

To simplify the notation, in what follows we will omit sometimes the explicit dependence on $x_1+\xi$ and $y_1+\xi$. By virtue of Eq.~\eqref{w2} we have
\begin{align}
\label{w2<<}
 D(Z_N,\Sigma_N) & \le \int\! \rmd \xi\, \varphi(\xi) \Big(2|u_N^\varphi - \tilde u_N^\varphi|^2 + 2|\tilde u_N^\varphi - u_g^\varphi|^2\Big)  \nonumber \\ & \quad+ \int\! \rmd \xi\, \varphi(\xi)\, \Big[2d\Big(\sqrt{T_N^\varphi} - \sqrt{\bar T_N^\varphi}\Big)^2  + 2d\Big(\sqrt{\bar T_N^\varphi} - \sqrt{T_g^\varphi}\Big)^2\Big],
\end{align}
where $\tilde u_N^\varphi$ is defined in Eq.~\eqref{empf2}. Recalling the definitions Eqs.~\eqref{pij} and \eqref{qij}, from Cauchy-Schwarz inequality,
\begin{equation}
\label{u-u}
|u_N^\varphi - \tilde u_N^\varphi|^2 \le 2\mc V+ 2 \bigg(\sum_{j=1}^N |p_{1,j}-q_{1,j}||w_j|\bigg)^2,
\end{equation}
where 
\begin{equation}
\label{DV}
\mc V = \mc V(\xi,Z_N,\Sigma_N)  := \sum_{j=1}^N p_{1,j} |v_j-w_j|^2\,.
\end{equation}
Concerning the difference between the empirical and mixed temperature, we observe that
\[
\begin{split}
\Big|T_N^\varphi - \bar T_N^\varphi\Big| & \le \frac 1d \sum_{j=1}^N p_{1,j}  \big| |v_j- u_N^\varphi|^2 - |w_j-\tilde u_N^\varphi|^2 \big| \\ & = \sum_{j=1}^N p_{1,j}  \Big|(v_j - u_N^\varphi - w_j + \tilde u_N^\varphi)\cdot (v_j - u_N^\varphi + w_j - \tilde u_N^\varphi) \Big|\\ & \le  \frac 1d  \sum_{j=1}^N p_{1,j} (|v_j -w_j| + |u_N^\varphi -\tilde u_N^\varphi|) |v_j-u_N^\varphi| \\ & \quad +  \frac 1d  \sum_{j=1}^N p_{1,j}  (|v_j -w_j| + |u_N^\varphi -\tilde u_N^\varphi|) |w_j-\tilde u_N^\varphi| \\ & \le \frac{1}{\sqrt d} \Big(\sqrt{\mc V} + |u_N^\varphi -\tilde u_N^\varphi| \Big) \Big(\sqrt{T_N^\varphi} +\sqrt{\bar T_N^\varphi}\Big)\,,
\end{split}
\]
where we used the Cauchy-Schwarz inequality in the last passage. Therefore, by Eq.~\eqref{u-u},
\begin{align}
\label{t-bt}
d \Big(\sqrt{T_N^\varphi} - \sqrt{\bar T_N^\varphi}\Big)^2 & = d \bigg(\frac{T_N^\varphi -  \bar T_N^\varphi}{\sqrt{T_N^\varphi} +\sqrt{ \bar T_N^\varphi}}\bigg)^2 \le 2 \mc V + 2 |u_N^\varphi -\tilde u_N^\varphi|^2\nonumber  \\ & \le 6 \mc V + 4 \bigg(\sum_{j=1}^N |p_{1,j}-q_{1,j}||w_j|\bigg)^2. 
\end{align}
On the other hand, from Eq.~\eqref{eq:TA},
\begin{equation}
\label{bt-gt}
d\Big(\sqrt{\bar T_N^\varphi} - \sqrt{T_g^\varphi}\Big)^2 =  d\bigg(\frac{\bar T_N^\varphi - T_g^\varphi}{\sqrt{T_N^\varphi} +\sqrt{T_g^\varphi}}\bigg)^2 \le \frac{2d\big(\bar T_N^\varphi - \tilde T_N^\varphi\big)^2}{A_t} + \frac{2d\big( \tilde T_N^\varphi - T_g^\varphi\big)^2}{A_t}\,,
\end{equation}
where $\tilde T_N^\varphi$ is defined in Eq.~\eqref{empf2}. Hence, by Eqs.~\eqref{w2<<}, \eqref{u-u}, \eqref{t-bt} and \eqref{bt-gt}, 
\begin{equation}
\label{w2<}
D(Z_N,\Sigma_N) \le D_1(Z_N,\Sigma_N) + D_2(Z_N, \Sigma_N) + \mc E(\Sigma_N) \,,
\end{equation}
with
\begin{align}
\label{D1}
D_1(Z_N,\Sigma_N) & = \int\! \rmd \xi\, \varphi(\xi) \, 16 \mc V(\xi,Z_N,\Sigma_N)\,, \\ \label{D2}  D_2(Z_N,\Sigma_N) & = \int\! \rmd \xi\, \varphi(\xi) \bigg[ 12 \bigg(\sum_{j=1}^N |p_{1,j}-q_{1,j}||w_j|\bigg)^2 + \frac{4d\big(\bar T_N^\varphi - \tilde T_N^\varphi\big)^2}{A_t} \bigg], \\ \label{E} \mc E(\Sigma_N) & =  \int\! \rmd \xi\, \varphi(\xi) \, 2 |\tilde u_N^\varphi- u_g^\varphi|^2 + \int\! \rmd \xi\, \varphi(\xi) \,  \frac{4d\big( \tilde T_N^\varphi - T_g^\varphi \big)^2 }{A_t}\,.
\end{align}
From Eqs.~\eqref{I<1}, \eqref{w2<}, and recalling the definition Eq.~\eqref{g} of the good set, we arrive at the following estimate on the derivative of $I_N(t)$, 
\begin{equation}
\label{I<2}
\dot I_N(t) \le I_N(t) + \mc D_a(t) + \mc D_b(t) + \mc D_c(t) + \mc D_d(t)\,,
\end{equation}
where
\[
\begin{split}
\mc D_a(t) & =  \int\! \rmd R_N(t)\, D_1(Z_N,\Sigma_N)\,, \qquad \mc D_b(t) = \int\! \rmd R_N(t)\, D_2(Z_N,\Sigma_N)  \id_{\mc G^\complement}\,, \\ \mc D_c(t) & = \int\! \rmd R_N(t)\, D_2(Z_N,\Sigma_N) \id_{\mc G}\,, \quad \mc D_d(t) = \int\! \rmd g(t)^{\otimes N} \, \mc E (\Sigma_N)\,.
\end{split}
\]

\subsection{Upper bound on  \texorpdfstring{$\mc D_a(t)$}{a}}
\label{sec:5.1}

Since $\rmd R_N(t)$ is symmetric with respect to particle permutations, 
\begin{align}
\label{da}
\mc D_a(t) & =  \int\! \rmd R_N(t) \int\! \rmd \xi\, \varphi(\xi) \, 16 \mc V(\xi,Z_N,\Sigma_N) \nonumber \\ & = \frac{16}N \sum_{i=1}^N  \int\! \rmd R_N(t)\int\! \rmd \xi\, \sum_{j=1}^N p_{i,j} |v_j-w_j|^2 \nonumber \\ & = \frac{16}N  \int\! \rmd R_N(t) \sum_{j=1}^N \bigg( \int\! \rmd \xi\, \varphi(\xi) \sum_{i=1}^N p_{i,j}\bigg) |v_j-w_j|^2 \nonumber \\ &  \le 16\varphi_0 I_N(t) \,,
\end{align}
where we used the upper bound of Eq.~\eqref{pijstim} in the last estimate.

\subsection{Upper bound on \texorpdfstring{$\mc D_b(t)$}{b}}
\label{sec:5.2}

By repeatedly applying the Cauchy-Schwarz inequality we have,
\[
\bigg(\sum_{j=1}^N |p_{1,j}-q_{1,j}||w_j|\bigg)^2 \le \bigg(\sum_{j=1}^N (p_{1,j} + q_{1,j})|w_j|\bigg)^2 \le 2  \sum_{j=1}^N (p_{1,j} + q_{1,j}) |w_j|^2 \,, 
\]
\[
\big(\bar T_N^\varphi - \tilde T_N^\varphi\big)^2 \le  \bigg(\sum_{j=1}^N (p_{1,j} + q_{1,j})|w_j-\tilde u_N^\varphi|^2\bigg)^2 \le C \sum_{j=1}^N (p_{1,j}+q_{1,j}) |w_j|^4 + C |\tilde u_N^\varphi\big|^4\,, 
\]
and
\[
|\tilde u_N^\varphi\big|^4 \le \bigg(\sum_{j=1}^N q_{1,j} |w_j| \bigg)^4 \le \sum_{j=1}^Nq_{1,j} |w_j|^4\,.
\]
Therefore, by Eq.~\eqref{D2} and recalling the definition Eq.~\eqref{g} of $\mc G$,
\begin{align}
\label{db1}
\mc D_b(t) & \le C  \int\! \rmd R_N(t)\, \int\! \rmd \xi\, \varphi(\xi) \sum_{j=1}^N (p_{1,j}+q_{1,j})(|w_j|^2 + |w_j|^4) \id_{\mc G^\complement} \nonumber \\ & \le C(\mc R_1 + \mc R_2)\,,
\end{align}
where
\[
\begin{split}
\mc R_1 & = \int\! \rmd R_N(t)  \int\! \rmd \xi\, \varphi(\xi) \sum_{j=1}^N (p_{1,j}+q_{1,j})(|w_j|^2 + |w_j|^4)  \id_{\mc B_A^\complement}\,, \\ \mc R_2 & = \int\! \rmd R_N(t)  \int\! \rmd \xi\, \varphi(\xi) \sum_{j=1}^N (p_{1,j}+q_{1,j})(|w_j|^2 + |w_j|^4) \id_{\mc G_1^\complement}\,.
\end{split}
\]

Since $\rmd R_N(t)$ and $\id_{\mc B_A^\complement}$ are symmetric with respect to particle permutations,
 \[
\begin{split}
\mc R_1 & = \frac 1N \sum_{i=1}^N \int\! \rmd R_N(t)  \int\! \rmd \xi\, \varphi(\xi) \sum_{j=1}^N (p_{i,j}+q_{i,j})(|w_j|^2 + |w_j|^4)  \id_{\mc B_A^\complement} \\ & =  \int\! \rmd R_N(t)\, \frac 1N \sum_{j=1}^N  \bigg( \int\! \rmd \xi\, \varphi(\xi) \sum_{i=1}^N (p_{i,j} + q_{i,j}) \bigg) (|w_j|^2 + |w_j|^4) \id_{\mc B_A^\complement}  \\ & \le 2\varphi_0  \int\! \rmd R_N(t)\, \frac 1N \sum_{j=1}^N (|w_j|^2 + |w_j|^4) \id_{\mc B_A^\complement}\,,
\end{split}
\]
where we used the upper bounds of Eqs.~\eqref{pijstim} and \eqref{qijstim} in the last inequality. Therefore, from the Cauchy-Schwarz inequality and  Eqs.~\eqref{eq:densy} and \eqref{eq:stimv},
\begin{align}
\label{db2}
\mc R_1 &  \le 2 \varphi_0 \bigg( \int\! \rmd R_N(t)\, \frac 1N \sum_{j=1}^N  (|w_j|^2 + |w_j|^4)^2 \bigg)^{1/2} \bigg( \int\! \rmd R_N(t)\, \id_{\mc B_A^\complement} \bigg)^{1/2} \nonumber \\ &\le \frac{C\varphi_0}{(r^{3d} N)^{1/2}} \bigg( \int\! \rmd R_N(t)\, \frac 1N \sum_{j=1}^N  (|w_j|^4 + |w_j|^8)\bigg)^{1/2} \le  \frac{C\varphi_0}{(r^{3d} N)^{1/2}}\,.
\end{align}

Analogously, since $\rmd R_N(t)$ and $\id_{\mc G_1^\complement}$ are symmetric with respect to particle permutations, by applying the upper bound of Eq.~\eqref{pijstim} and the Cauchy-Schwarz inequality,
\[
\begin{split}
\mc R_2 & = \frac 1N \sum_{i=1}^N \int\! \rmd R_N(t)  \int\! \rmd \xi\, \varphi(\xi) \sum_{j=1}^N (p_{i,j}+q_{i,j})(|w_j|^2 + |w_j|^4) \id_{\mc G_1^\complement} \\ & \le 2\varphi_0 \int\! \rmd R_N(t)  \frac 1N \sum_{j=1}^N (|w_j|^2 + |w_j|^4) \id_{\mc G_1^\complement} \\ & \le C \varphi_0 \int\! \rmd R_N(t)  \bigg(1 + \frac 1N \sum_{j=1}^N |w_j|^4\bigg) \id_{\mc G_1^\complement}\,.
\end{split}
\]
Recalling  Eq.~\eqref{eq:llnw}, we estimate $\id_{\mc G_1^\complement} \le  \id_{\mc G_1^\complement \cap \mc G_{M,4}} + \id_{\mc G_{M,4}^\complement}$ so that
\begin{align}
\label{db3}
\mc R_2 & \le C(1+M) \varphi_0\int\! \rmd R_N(t)\, \id_{\mc G_1^\complement} + C\varphi_0  \int\! \rmd R_N(t) \frac 1N \sum_{j=1}^N (1+|w_j|^4) \id_{\mc G_{M,4}^\complement} \nonumber \\ & \le \frac{C(1+M) \varphi_0}{A_\varphi^2} \int\! \rmd R_N(t)\, \frac 1N \sum_{i=1}^N  |x_i-y_i|^2 \nonumber \\ & \quad + C \varphi_0 \bigg[\int\! \rmd R_N(t)\, \bigg(\frac 1N \sum_{j=1}^N (1+|w_j|^4)\bigg)\bigg]^{1/2} \bigg( \int\! \rmd R_N(t)\, \id_{\mc G_{M,4}^\complement} \bigg)^{1/2}  \nonumber \\ & \le \frac{C (1+ M) \varphi_0}{A_\varphi^2} I_N(t) + \frac{C\varphi_0}{N^{1/2}}\,,
 \end{align}
where we used Chebyshev's inequality, Cauchy-Schwarz inequality twice, and finally Eqs.~\eqref{eq:stimv} and \eqref{eq:stimllnw}.

From Eqs.~\eqref{db1}, \eqref{db2}, and \eqref{db3}, and by Eq.~\eqref{afi}, we finally obtain
\begin{equation}
\label{da<}
\mc D_b(t) \le C \frac{\|\nabla\varphi\|_\infty^2}{r^{2d}\varphi_0} I_N(t) + \frac{C\varphi_0}{(r^{3d} N)^{1/2}}\,.
\end{equation} 

\subsection{Upper bound on \texorpdfstring{$\mc D_c(t)$}{c}}
\label{sec:5.3}

As
\[
\begin{split}
p_{1,j} - q_{1,j}  & = \frac{\varphi(x_1+\xi-x_j) - \varphi(y_1+\xi-y_j)}{N\varrho_N^\varphi(x_1+\xi)} \\ & \quad + \varphi(y_1+\xi-y_j) \frac{\sum_k [ \varphi(y_1+\xi-y_k) - \varphi(x_1+\xi-x_k)]}{N^2\varrho_N^\varphi(x_1+\xi) \tilde\varrho_N^\varphi(y_1+\xi)}\,,
\end{split} 
\]
from Eq.~\eqref{stimdens} we have that
\begin{align}
\label{stimp-q}
& |p_{1,j} - q_{1,j}| \le \frac{2\|\nabla\varphi\|_\infty}{Ar^d\varphi_0N} \big(|x_1-y_1| + |x_j-y_j| \big) \nonumber \\ & \qquad + \frac{2\varphi_0\|\nabla\varphi\|_\infty}{N^2(Ar^d\varphi_0)^2} \sum_{k=1}^N \big(|x_1-y_1| + |x_k-y_k| \big) \nonumber \\ & \quad  \le \frac{C\|\nabla\varphi\|_\infty}{r^{2d}\varphi_0 N} \bigg(|x_1-y_1| + |x_j-y_j| + \frac 1N \sum_{k=1}^N |x_k-y_k| \bigg) \quad \text{in the set }\mc G.
\end{align}
Therefore, from Cauchy-Schwarz inequality, in the set $\mc G$,
\[
\bigg(\sum_{j=1}^N |p_{1,j}-q_{1,j}||w_j|\bigg)^2 \le \frac{C\|\nabla\varphi\|_\infty^2}{r^{4d}\varphi_0^2} \bigg( |x_1-y_1|^2 + \frac 1N\sum_{k=1}^N |x_k-y_k|^2\bigg)\frac 1N \sum_{j=1}^N|w_j|^2 \,.
\]
Analogously, still in $\mc G$,
\[
\begin{split}
\big(\bar T_N^\varphi - \tilde T_N^\varphi\big)^2 \le & \bigg(\sum_{j=1}^N |p_{1,j} - q_{1,j}| |w_j-\tilde u_N^\varphi|^2\bigg)^2  \\ &\ \le \frac{C\|\nabla\varphi\|_\infty^2}{ r^{4d}\varphi_0^2} \bigg( |x_1-y_1|^2 + \frac 1N\sum_{k=1}^N |x_k-y_k|^2\bigg) \frac 1N \sum_{j=1}^N |w_j-\tilde u_N^\varphi|^4 \\ &\ \le \frac{C\|\nabla\varphi\|_\infty^2}{r^{5d}\varphi_0^2} \bigg( |x_1-y_1|^2 + \frac 1N\sum_{k=1}^N |x_k-y_k|^2\bigg) \frac 1N \sum_{j=1}^N |w_j|^4\,,
\end{split}
\]
where in the last inequality, we used that, because of Eq.~\eqref{stimdens},
\[
|\tilde u_N^\varphi\big|^4 \le \bigg(\sum_{j=1}^N q_{1,j} |w_j| \bigg)^4 \le \sum_{j=1}^Nq_{1,j} |w_j|^4 \le \frac{1}{Ar^dN} \sum_{j=1}^N |w_j|^4\quad \text{ in the set }\mc G.
\]
Recalling Eq.~\eqref{D2}, the above estimates allow to control  $D_2(Z_N,\Sigma_N)$ in the set $\mc G$,
\[
\mc D_c(t) \le  \frac{C\|\nabla\varphi\|_\infty^2}{r^{5d}\varphi_0^2}  \int\! \rmd R_N(t)\, \bigg( |x_1-y_1|^2 + \frac 1N\sum_{k=1}^N |x_k-y_k|^2\bigg) \frac 1N \sum_{j=1}^N (1+|w_j|^4)\,.
\]
We now argue analogously to what done to get Eq.~\eqref{db3}: recalling  Eq.~\eqref{eq:llnw} and inserting $1= \id_{\mc G_{M,4}} + \id_{\mc G_{M,4}^\complement}$ in the right-hand side we have that 
\begin{align}
\label{dc1}
\mc D_c(t) & \le \frac{2C(1+M)\|\nabla\varphi\|_\infty^2}{r^{5d}\varphi_0^2} I_N(t) + \frac{C\|\nabla\varphi\|_\infty^2}{r^{5d}\varphi_0^2}  \int\! \rmd R_N(t)\,\frac 1N \sum_{j=1}^N (1+|w_j|^4)  \id_{\mc G_{M,4}^\complement} \nonumber \\ & \le \frac{2C(1+M)\|\nabla\varphi\|_\infty^2}{r^{5d}\varphi_0^2} I_N(t) + \frac{C\|\nabla\varphi\|_\infty^2}{r^{5d} \varphi_0^2 N^{1/2}}\,,
\end{align}
where in estimating the integrand in $\mc G_{M,4}^\complement$ we used that the mutual distance among the particles is not greater than one.

\subsection{Upper bound on  \texorpdfstring{$\mc D_d(t)$}{d}}
\label{sec:5.4}

We decompose
\[
\mc D_d(t) = \mc D_d^{(1)}(t) + \mc D_d^{(2)}(t)\,,
\]
where 
\begin{align*}
D_d^{(1)}(t) & := \int\! \rmd g(t)^{\otimes N} \int\! \rmd \xi\, \varphi(\xi) \, 2 |\tilde u_N^\varphi- u_g^\varphi|^2\,, \\ \mc D_d^{(2)}(t) & := \int\! \rmd g(t)^{\otimes N} \int\! \rmd \xi\, \varphi(\xi) \,  \frac{4d\big( \tilde T_N^\varphi - T_g^\varphi \big)^2 }{A_t}\,,
\end{align*}
and analyze the two terms separately.

\medskip
\noindent
\textit{Upper bound on $\mc D_d^{(1)}(t)$.} After introducing the random variables
\[
\mc U_j(\xi) := \varrho_g^\varphi(\xi,t)\varphi(\xi-y_j)w_j - \varphi(\xi-y_j) (\varrho_g^\varphi u_g^\varphi) (\xi,t)\,,
\]
we observe that
\[
\begin{split}
& \int\! \rmd \xi\, \varphi(\xi) \,|\tilde u_N^\varphi- u_g^\varphi|^2 \le  \int\! \rmd \xi\, \varphi(\xi)\, \big[|\tilde u_N^\varphi- u_g^\varphi|^2 \id_{\mc B_A} + (|\tilde u_N^\varphi|^2 + |u_g^\varphi|^2) \id_{\mc B_A^\complement}\big] \\ & \quad \quad = \int\! \rmd \xi\, \frac{ \varphi(\xi-y_1) }{\tilde\varrho_N^\varphi (\xi)^2 \varrho_g^\varphi(\xi)^2} \bigg|\frac 1N \sum_{j=1}^N \mc U_j(\xi)  \bigg|^2\id_{\mc B_A} +  \int\! \rmd \xi\, \varphi(\xi) \big(|\tilde u_N^\varphi|^2 + |u_g^\varphi|^2\big) \id_{\mc B_A^\complement} \\ & \quad \quad \le  \frac{C}{\varphi_0^2 r^{2d}} \int\! \rmd \xi\, \varphi(\xi-y_1) \bigg|\frac 1N \sum_{j=1}^N \mc U_j(\xi)  \bigg|^2 +  C \int\! \rmd \xi\, \varphi(\xi)\sum_{j=1}^N q_{1,j} (1+|w_j|^2) \id_{\mc B_A^\complement}\,.
\end{split}
\]
Above, we applied (after the change of variables $\xi \to \xi+y_1$) the definition Eq.~\eqref{ba} and the lower bound Eq.~\eqref{stimrho} to estimate the first term in the right-hand side, the Cauchy-Schwarz inequality together with Eq.~\eqref{eq:utK}  to estimate  the second term.

We notice that the variables $\mc U_j(\xi)$ are i.i.d.\ and satisfy
\begin{equation}
\label{Uj}
\begin{split}
& \int\! \rmd g(t)^{\otimes N}\, \mc U_j(\xi) = 0\,, \\ & \int\! \rmd g(t)^{\otimes N}\, |\mc U_j(\xi)|^2 \le C \varphi_0 \int\!\rmd y\, \rmd w\, \varphi(\xi-y) g (y,w,t) |w|^2 \le  C \varphi_0\,,
\end{split}
\end{equation}
where we used the upper bound on $\mc N_q(g(t))$ given in Eq.~\eqref{eq:utK} with $q>2+d$. Therefore,
\begin{align}
\label{lln}
\int\! \rmd g(t)^{\otimes N}\,  \int\! \rmd \xi\, \varphi(\xi-y_1) \bigg|\frac 1N \sum_{j=1}^N \mc U_j(\xi)  \bigg|^2 & \le \varphi_0 \int\! \rmd \xi\, \int\! \rmd g(t)^{\otimes N} \bigg|\frac 1N \sum_{j=1}^N \mc U_j(\xi)  \bigg|^2 \nonumber \\ &  \le \frac{C\varphi_0^2}N\,.
\end{align}
On the other hand, 
\begin{align}
\label{stw}
& \int\! \rmd g(t)^{\otimes N}\,  \int\! \rmd \xi\, \varphi(\xi)\sum_{j=1}^N q_{1,j} (1+|w_j|^2) \id_{\mc B_A^\complement} \nonumber \\ & \qquad\qquad= \frac 1N \sum_{i=1}^N \int\! \rmd g(t)^{\otimes N} \int\!\rmd \xi\, \varphi(\xi) \sum_{j=1}^Nq_{i,j} (1 + |w_j|^2) \id_{\mc B_A^\complement}\nonumber \\ & \qquad \qquad \le C\varphi_0  \int\! \rmd g(t)^{\otimes N} \frac 1N \sum_{j=1}^N (1 + |w_j|^2) \id_{\mc B_A^\complement} \le \frac{C\varphi_0}{(r^{3d} N)^{1/2}}\,,
\end{align}
where we used that $\rmd g(t)^{\otimes N}$ and $\id_{\mc B_A^\complement}$ are symmetric with respect to particle permutations, the upper bound of Eqs.~\eqref{qijstim}, and finally, as done in Eq.~\eqref{db2}, the Cauchy-Schwarz inequality and  Eqs.~\eqref{eq:densy} and \eqref{eq:stimv}. 

Putting the above together, we obtain
\begin{equation}
\label{dd1}
\mc D_d^{(1)}(t) \le C  \bigg(\frac{1}{r^{2d}N} + \frac{\varphi_0}{(r^{3d} N)^{1/2}}\bigg)\,.
\end{equation}

\medskip
\noindent
\textit{Upper bound on $\mc D_d^{(2)}(t)$.} We argue similarly to the previous case. Since
\[
d(\tilde T_N^\varphi - T_g^\varphi) = \sum_{j=1}^N q_{1,j} |w_j|^2 - (d T_g^\varphi + |u_g^\varphi|^2) + |u_g^\varphi|^2 - |\tilde u_N^\varphi|^2\,,
\]
after introducing the random variables
\[
\mc T_j(\xi) = \varrho_g^\varphi(\xi,t)\varphi(\xi-y_j) |w_j|^2 - \varphi(\xi-y_j) ( \varrho_g^\varphi d T_g^\varphi + |u_g^\varphi|^2)(\xi,t)
\]
and using that $(|u_g^\varphi|^2 - |\tilde u_N^\varphi|^2)^2 \le |\tilde u_N^\varphi- u_g^\varphi|^4$, we have
\begin{align}
\label{t-t}
& \int\! \rmd \xi\, \varphi(\xi) \, (\tilde T_N^\varphi - T_g^\varphi)^2  \le \int\! \rmd \xi\, \varphi(\xi)\, \big[(\tilde T_N^\varphi - T_g^\varphi)^2 \id_{\mc B_A} + \big((\tilde T_N^\varphi)^2 + (T_g^\varphi)^2\big)\id_{\mc B_A^\complement}\big]  \nonumber \\ & \quad  = \int\! \rmd \xi\, \frac{\varphi(\xi-y_1)}{\tilde\varrho_N^\varphi (\xi)^2 \varrho_g^\varphi(\xi)^2} \bigg\{\bigg(\frac 1{Nd} \sum_{j=1}^N \mc T_j(\xi)  \bigg)^2 + \frac{1}{\tilde\varrho_N^\varphi (\xi)^2 \varrho_g^\varphi(\xi)^2} \bigg|\frac 1{Nd} \sum_{j=1}^N \mc U_j(\xi)  \bigg|^4 \bigg\} \id_{\mc B_A} \nonumber \\ & \quad \qquad + \int\! \rmd \xi\, \varphi(\xi) \big[(\tilde T_N^\varphi)^2 + (T_g^\varphi)^2\big]\id_{\mc B_A^\complement} \nonumber \\ & \quad \quad \le  \frac{C}{\varphi_0^2 r^{2d}} \int\! \rmd \xi\, \varphi(\xi-y_1)\bigg\{\bigg(\frac 1N \sum_{j=1}^N \mc T_j(\xi)  \bigg)^2 +  \frac{1}{\varphi_0^2 r^{2d}} \bigg|\frac 1N \sum_{j=1}^N \mc U_j(\xi)  \bigg|^4\bigg\}  \nonumber \\ & \quad\qquad + C \int\! \rmd \xi\, \varphi(\xi)\sum_{j=1}^N q_{1,j} (1+|w_j|^4) \id_{\mc B_A^\complement}\,.
\end{align}
Clearly, the expectation with respect to $\rmd g(t)^{\otimes N}$ of the last term in the right-hand side of Eq.~\eqref{t-t} can be bounded as in Eq.~\eqref{stw}, while, by Eq.~\eqref{Uj},
\[
\int\! \rmd g(t)^{\otimes N}\,  \int\! \rmd \xi\, \varphi(\xi-y_1) \bigg|\frac 1N \sum_{j=1}^N \mc U_j(\xi)  \bigg|^4 \le \frac{C\varphi_0^3}{N^2}\,.
\]
Finally, concerning the expectation with respect to $\rmd g(t)^{\otimes N}$ of the first term in the right-hand side of Eq.~\eqref{t-t}, we observe that also the variables $\mc T_j(\xi)$ are i.i.d.\ and satisfy
\[
\begin{split}
& \int\! \rmd g(t)^{\otimes N}\, \mc T_j(\xi) = 0\,, \\ & \int\! \rmd g(t)^{\otimes N}\, |\mc T_j(\xi)|^2 \le C \varphi_0 \int\!\rmd y\, \rmd w\, \varphi(\xi-y) g (y,w) |w|^4 \le  C \varphi_0\,,
\end{split}
\]
where we used the upper bound on $\mc N_q(g(t))$ given in Eq.~\eqref{eq:utK} with $q>4+d$. Therefore, analogously to Eq.~\eqref{lln},
\[
\int\! \rmd g(t)^{\otimes N}\,  \int\! \rmd \xi\, \varphi(\xi-y_1) \bigg(\frac 1N \sum_{j=1}^N \mc T_j(\xi)  \bigg)^2 \le \frac{C\varphi_0^2}N\,.
\]

Collecting together the above bounds, we obtain
\begin{equation}
\label{dd2}
\mc D_d^{(2)}(t) \le C  \bigg(\frac{1}{r^{2d}N} +  \frac{1}{r^{4d}\varphi_0N^2} + \frac{\varphi_0}{(r^{3d} N)^{1/2}}\bigg)\,.
\end{equation}

From Eqs.~\eqref{dd1} and \eqref{dd2} we conclude that
\begin{equation}
\label{dd5}
\mc D_d(t) \le C \bigg(\frac{1}{r^{2d}N} +  \frac{1}{r^{4d}\varphi_0 N^2} + \frac{\varphi_0}{(r^{3d} N)^{1/2}}\bigg)\,.
\end{equation}

\subsection{Proof of Eq.~(\ref{I_N})}
\label{sec:5.5}

From Eqs.~\eqref{I<2}, \eqref{da}, \eqref{da<}, \eqref{dc1}, and \eqref{dd5} we get
\[
\begin{split}
\dot I_N(t) & \le C \bigg( \varphi_0 + \frac{\|\nabla\varphi\|_\infty}{r^{2d}\varphi_0} + \frac{\|\nabla\varphi\|_\infty^2}{r^{5d}\varphi_0^2} \bigg) I_N(t) \\ & \quad + \bigg(\frac{1}{r^{2d} N} +  \frac{1}{r^{4d}\varphi_0 N^2} +  \frac{\varphi_0}{(r^{3d} N)^{1/2}} + \frac{\|\nabla\varphi\|_\infty^2}{r^{5d} \varphi_0^2 N^{1/2}} \bigg)\,.
\end{split}
\]
We then choose
\begin{equation}
\label{N0Gammaphi}
\Gamma_{\varphi} = \frac{\varphi_0^3 + \|\nabla\varphi\|_\infty^2}{r^{5d}\varphi_0^2}\,, \qquad N_\varphi = \frac{\varphi_0^4} {r^{6d}} + \frac{\|\nabla\varphi\|_\infty^8}{(r^{5d}\varphi_0^2)^4}\,,
\end{equation}
so that, recalling $\varphi_0\ge1$,
\[
\bigg(\varphi_0 + \frac{\|\nabla\varphi\|_\infty}{r^{2d}\varphi_0} +  \frac{\|\nabla\varphi\|_\infty^2}{r^{5d}\varphi_0^2} \bigg) \le 2\Gamma_\varphi
\]
and
\[
\bigg(\frac{1}{r^{2d}N} +  \frac{1}{r^{4d}\varphi_0N^2} + \frac{\varphi_0}{(r^{3d} N)^{1/2}} + \frac{\|\nabla\varphi\|_\infty^2}{r^{5d} \varphi_0^2 N^{1/2}} \bigg) \le \frac{4}{N^{1/4}} \qquad \forall\, N>N_\varphi\,.
\]
Therefore, $\dot I_N(t) \le C\Gamma_\varphi I_N(t) + C/N^{1/4}$ for any $N>N_\varphi$, from which Eq.~\eqref{I_N} follows by Gr\"{o}nwall's inequality and hence the proof of Theorem \ref {teo:main} is achieved.

\section{Removing the cutoff and conclusion}
\label{sec:6}

The regularized BGK equation Eq.~\eqref{eq:kin} reduces to the usual one Eq.~\eqref{BGK} when the cutoff function $\varphi$ converges to the $\delta$-function, at least formally. In this section, $\varphi=\varphi^{(\eps)}$ is the rescaled function given by \eqref{scaphi} and we assume $t\in [0,T]$ for fixed $T>0$.  We recall that $C$ denotes a generic positive constant, possibly depending only on $T$ and $f_0$, hence independent of $\eps$.

\begin{proof}[Proof of  Theorem \ref {teo:lim}]
The limit $\eps \to 0$ was investigated in \cite{BHP}, where the convergence $g \to f$ is proven in a weighted $L^1$ space, see \cite[Theorem 2.4]{BHP}. Here, given $ f$ and $g$, we rather study the processes $(x(t),v(t))$ and $(y(t),w(t))$ whose generators are given by ($\psi$ denotes a test function)
\begin{equation}
\label{nonlin1}
\mc L_1\psi(x,v) = (v\cdot \nabla_x -1)\psi (x,v) + \int\! \rmd \tilde v\, M_f( x,\tilde v) \psi(x,\tilde v)
\end{equation}
and $\mc L_2 = \mc L_1^g$ as in Eq.~\eqref{nonlin}, i.e.,
\begin{equation}
\label{nonlin2}
\mc L_2\psi(y,w) = (w\cdot \nabla_y -1) \psi(y,w) + \int\! \rmd \tilde y \, \varphi(\tilde y - y) \int\! \rmd \tilde w\, M_g^\varphi(\tilde y,\tilde w) \psi(\tilde y,\tilde w) \,,
\end{equation}
respectively.

We fix $f_0$ as common initial datum for Eqs.~\eqref{eq:bgk} and \eqref{eq:kin}, and assume that it satisfies the hypotheses Eqs.~\eqref{eq:f0} and \eqref{grad}. Note that, in particular, the hypothesis of Propositions \ref{prop:bgk} and \ref{prop:stim_uT} are satisfied.

We couple the two processes by defining in the product space the process whose generator is given by ($G$ denotes a test function)
\[
\begin{split}
\mc L_Q G (x,v,y,w)  & = ( v\cdot \nabla_x+w\cdot \nabla_y -1) G(x,v,y,w) \\ & \quad + \int\! \rmd \tilde y \int\! \rmd \tilde v \int\! \rmd \tilde w\, \mc M
(x,\tilde v;\tilde y, \tilde w)  \varphi(\tilde y - y)  G (x, \tilde v , \tilde y,\tilde w)\,,
\end{split} 
\]
where, analogously to what done in Sec.~\ref{sec:3.1}, $\mc M(x,\tilde v;\tilde y, \tilde w)$ is the joint representation of the Maxwellians $M_f$ and $M^\varphi_g$ that realizes the 2-Wasserstein distance between the marginals. We denote by $\rmd R(t) = \rmd R(x,v;y,w;t)$ the law of this process assuming that initially
\[
\rmd R(x,v;y,w;0) = f_0(x,v) \delta (x-y) \delta (v-w).
\] 

Clearly,
\begin{equation}
\label{W<I}
\mc W_2(f(t),g(t))^2 \le I(t) := \int\! \rmd R(t)\, \big(|x-y|^2 + |v-w|^2\big).
\end{equation}
To obtain an upper bound for $I(t)$ we compute,
\[
\begin{split}
\dot I(t) & = 2\int\! \rmd R(t)\, (x-y) \cdot (v-w) -\int\! \rmd R(t)\, |x-y|^2 \\ & \quad + \int\! \rmd R(t) \int\! \rmd \tilde y \, \varphi(y-\tilde y) |x-\tilde y|^2 + \frac{\rmd}{\rmd t} \int\! \rmd R(t)\, |v-w|^2 \\ & \le \int\! \rmd R(t)\, |v-w|^2 + 2 \int\! \rmd R(t)\, |x-y|^2 + E_1 + \frac{\rmd}{\rmd t} \int\! \rmd R(t)\, |v-w|^2 \,,
\end{split}
\]
where
\[
E_1= 2\int\! \rmd R(t) \int\! \rmd \tilde y\,  \varphi(y -\tilde y ) |y-\tilde y|^2  \le C \eps^2\,,
\]
while, by virtue of the choice of $\mc M(x,\tilde v;\tilde y, \tilde w)$,
\[
\frac{\rmd}{\rmd t} \int\! \rmd R(t)\, |v-w|^2 = S_1 +  S_2 -  \int\! \rmd R(t)\, |v-w|^2\,,
\]
with 
\[
S_1 = \int\! \rmd R(t)\, |u_f(x,t) -u_g^\varphi (y,t) |^2\,, \quad S_2 = \int\! \rmd R(t)\, d \Big(\sqrt {T_f(x,t)} - \sqrt {T_g^\varphi (y,t)}\Big)^2\,.
\]
Therefore,
\begin{equation}
\label{I<}
\dot I(t) \le 2 I(t) + S_1 + S_2 + C \eps^2
\end{equation}
and it remains to estimate $S_1$ and $S_2$.

\medskip
\noindent
\textit{Upper bound on $S_1$.} Setting 
\[
\begin{split}
\rmd r(x,y;t) & = \int_{v,w}\! \rmd R(x,v;y,w;t)\,, \\ \rmd R^\varphi(x,v;y,w;t)  & = \int\! \rmd \tilde y\, \varphi(y-\tilde y) \, \rmd R(x,v;\tilde y,w;t) \,,\\ \rmd r^\varphi(x,y;t) & = \int\! \rmd \tilde y\, \varphi(y-\tilde y)\, \rmd r(x,\tilde y;t)\,,
\end{split}
\]
we have
\[
S_1=\bar S_1+E_2\,,
\]
where
\[
\bar S_1= \int\! \rmd r^\varphi(t)\,  |u_f(x,t) -u_g^\varphi (y,t) |^2\,, \quad
E_2= \int\! (\rmd r- \rmd r^\varphi) (t)\, |u_f(x,t) -u_g^\varphi (y,t) |^2\,.
\]
We start by estimating $\bar S_1$, noticing that if
\[
g^\varphi(y,w,t) := \int\!\rmd \tilde y\, \varphi(y-\tilde y) g(\tilde y, w,t)
\]
then
\begin{align}
\label{T1}
\bar S_1= & \int\! \rmd r^\varphi(t)\,  \bigg|\frac {\int\!\rmd v\, f(x,v,t) v}{\varrho_f(x,t)} - \frac {\int\!\rmd w\, g^\varphi(y,w,t)w}{ \varrho_g^\varphi(y,t)}\bigg|^2 \nonumber  \\ = &
\int\! \rmd r^\varphi(t)\,\bigg|\int\! \rmd \Lambda_{x,y}(v,w)\, (v-w) \bigg|^2 \le
\int\! \rmd r^\varphi(t) \int\! \rmd \Lambda_{x,y}(v,w)\, |v-w|^2\,,
\end{align}
with (remarking also that $\varrho_g^\varphi = \varrho_{g^\varphi}$)
\begin{equation}
\label{rc}
\rmd \Lambda_{x,y}(v,w)\in \mc P\bigg(\frac {f(x,v,t)\, \rmd v}{\varrho_f(x,t)},\frac{g^\varphi(y,w,t) \,\rmd w}{\varrho_g^\varphi(y,t)}\bigg)
\end{equation}
to be fixed later (recall $\mc P(\mu,\nu)$ denotes the collection of all joint probability measures in the product space with marginals $\mu$ and $\nu$, see Remark \ref{rem:wass}).
 
Looking at the coupled stochastic processes $(x(t),y(t))$, we realize it is of the form 
\[
\begin{split}
x(t) & = x_0 + v_0 t_1 + v_1 (t_2-t_1) +v_2 (t_3-t_2) \cdots\,,
\\ y(t) & = x_0 + v_0 t_1 + \xi_1+ w_1 (t_2-t_1) + \xi_2 + w_2 (t_3-t_2) \cdots
\end{split}
\]
where $t_1 <t_2 < \cdots < t_k $ are exponential times in which the jumps in velocity are simultaneously performed. The outgoing velocities are Maxwellian computed via the hydrodynamical fields depending on $x(t_k)$ and $y(t_k) +\xi_k$. Finally, the extra displacements $\xi_k$ are i.i.d.~distributed according to $\varphi$. 

Now, let $R(\rmd v\, \rmd w|x,y;t)$, resp.~$R^\varphi(\rmd v\, \rmd w|x,y;t)$, be the conditional probability of $\rmd R(x,v;y,w;t)$, resp.~$\rmd R^\varphi (x,v;y,w;t)$, conditioned to the values $x,y$ at time $t$. For what just noticed on the structure of the coupled process, the conditional probability has the property that $\int\! R(\rmd v\, \rmd w|x,y;t) \, \psi (v)$ and $\int\! R(\rmd v\, \rmd w|x,y;t)\, \psi (w)$ are independent of $y$ and $x$, respectively. Similarly, also $\int\! R^\varphi(\rmd v\, \rmd w|x,y;t)\, \psi (v)$ and $\int\! R^\varphi(\rmd v\, \rmd w|x,y;t)\, \psi (w)$ are independent of $y$ and $x$, respectively.

\begin{remark}
\label{rem:ma}
Note that the measures $\rmd R (x,v;y,w;t)$ and $\rmd r (x,y;t) $ are not absolutely continuous with respect to the Lebesgue measure. Indeed, the contribution due to the event with zero jumps transports the initial delta function in position and velocity. Incidentally, this event does not give any contribution in the evaluation of $I(t)$. On the other hand, due to the convolution with $\varphi$,  $\rmd r^\varphi$ is absolutely continuous with respect to the Lebesgue measure, and we denote by $r^\varphi$ its density.
\end{remark}

We now observe that
\[
R^\varphi(\rmd v\, \rmd w|x,y;t) \in \mc P\bigg(\frac {f(x,v,t)\, \rmd v}{\varrho_f(x,t)},\frac{g^\varphi(y,w,t) \,\rmd w}{\varrho_g^\varphi(y,t)}\bigg)\,.
\]
Indeed, since $\varrho_g^\varphi(y,t)^{-1} \int\! \rmd x\, r^\varphi(x,y;t) =1$, for any test function $\psi$,
\[
\begin{split}
& \int\! \rmd y \int\! R^\varphi(\rmd v\, \rmd w|x,y;t)\, \psi(y,w) =  \int\! \rmd x \int\! \rmd y\, \frac{r^\varphi(x,y;t)}{\varrho_g^\varphi(y,t)} \int\! R^\varphi(\rmd v\, \rmd w|x,y;t)\, \psi(y,w) \\
& \qquad \quad = \int\! \rmd R^\varphi(x,v;y,w;t)\, \frac{\psi(y,w)}{\varrho_g^\varphi(y,t)}= \int\!\rmd y \int\! \frac{g^\varphi(y,w,t) \,\rmd w}{\varrho_g^\varphi(y,t)} \psi(y,w)\,,
\end{split}
\]
where, in the first step, we used the independence of $x$ of the left-hand side. The same argument can be used to prove that
\[
\int\!\rmd x \int\!R^\varphi(\rmd v\, \rmd w|x,y;t)\, \psi(x,v)  = \int\!\rmd x \int\!\frac {f(x,v,t)\, \rmd v}{\varrho_f(x,t)} \psi(x,v)\,.
\]
By choosing $\rmd \Lambda_{x,y}(v,w) = R^\varphi(\rmd v \, \rmd w|x,y;t)$ in Eq.~\eqref{T1} we obtain
\begin{align}
\label{est:T1}
\bar S_1 & \le  \int\! \rmd r^\varphi(t) \int\! R^\varphi(\rmd v \, \rmd w|x,y;t) \, |v-w|^2 \nonumber \\ & = \int\!\rmd R^\varphi(t)\, |v-w|^2 =  \int\!\rmd R (t)\, |v-w|^2\,.
\end{align}

Concerning the error term $E_2$, since $\int (\rmd r- \rmd r^\varphi) (t)\, |u_f(x)|^2 =0$ we have 
\[
E_2 = \int\! (\rmd r- \rmd r^\varphi) (t)\, \big(|u_g^\varphi(y,t)|^2 - 2 u_f(x,t) \cdot u_g^\varphi (y,t)\big) = E_2^{(1)} +E_2^{(2)}\,,
\]
with 
\[
\begin{split}
E_2^{(1)} & = \int\! \rmd r(t) \int\!\rmd \tilde y\, \varphi(y-\tilde y) \big(|u_g^\varphi(y,t)|^2 - |u_g^\varphi(\tilde y,t)|^2\big)\,, \\ E_2^{(2)} & = - 2 \int\! \rmd r(t) \int\!\rmd \tilde y\, \varphi(y-\tilde y) u_f(x,t) \cdot \big(u_g^\varphi(y,t) - u_g^\varphi(\tilde y,t) \big)\,.
\end{split}
\]
The assumptions Eqs.~\eqref{eq:f0} and \eqref{grad} are the hypotheses of \cite[Lemma 4.1]{BHP}, which in particular states
\[
\mc N_q(|\nabla_x g(t)|) \le C\,.
\]
This implies (see the proof of the same lemma)
\begin{equation}
\label{gradstim}
|\nabla_x \varrho_g| + |D_x u_g| + |\nabla_x T_g| \le C\,.
\end{equation}
Therefore, using also Eqs.~\eqref{eq:utf} and \eqref{eq:utK},
\[
\big|E_2^{(1)} \big| + \big|E_2^{(2)} \big| \le C  \int\! \rmd r(t)  \int\! \rmd \tilde y\,  \varphi(y -\tilde y ) |y-\tilde y| \le C \eps\,.
\]
In conclusion, from Eq.~\eqref{est:T1} and the above estimate,
\begin{equation}
\label{S1}
S_1 \le  C \int\! \rmd R(t)\, |v-w|^2 + C\eps \le I(t) + C\eps \,.
\end{equation}

\medskip
\noindent
\textit{Upper bound on $S_2$.} We proceed  analogously by setting
\[
S_2=\bar S_2+E_3\,,
\]
where
\[
\bar S_2 = \int\! \rmd r^\varphi(t)\, d \Big(\sqrt {T_f(x,t)} - \sqrt {T_g^\varphi (y,t)}\Big)^2
\]
and, by arguing as done for $E_2$,
\[
\begin{split}
E_3 & = \int\! (\rmd r- \rmd r^\varphi) (t)\, \Big(T_g^\varphi(y,t) - 2 \sqrt{T_f(x,t)}\sqrt{T_g^\varphi(y,t)}\Big) \\ & = \int\! \rmd r(t) \int\!\rmd \tilde y\, \varphi(y-\tilde y) \big(T_g^\varphi(y,t) - T_g^\varphi(\tilde y,t)\big) \\ & \quad - 2  \int\! \rmd r(t) \int\!\rmd \tilde y\, \varphi(y-\tilde y) \sqrt{T_f(x,t)}\Big(\sqrt{T_g^\varphi(y,t)} - \sqrt{T_g^\varphi(\tilde y,t)}\Big)\,.
\end{split}
\]
From Eqs.~\eqref{gradstim} and \eqref{eq:TA} the error term $E_3$ satisfies
\begin{equation}
\label{E3}
|E_3| \le C \int\! \rmd r(t)  \int\! \rmd \tilde y\,  \varphi(y -\tilde y ) |y-\tilde y| \le C \eps\,.
\end{equation}

Concerning $\bar S_2$, we observe that (omitting for brevity the dependence on $(x,t)$ in $u_f(x,t), T_f(x,t)$ and on $(y,t)$ in $u_g^\varphi(y,t), T_g^\varphi(y,t)$)
\[
\begin{split}
|T_f -T^\varphi_g| & = \frac 1d \bigg| \int\! \rmd \Lambda_{x,y} (v,w)\, \Big( |v-u_f|^2 - |w-u^\varphi_g|^2 \Big) \bigg| \\ & = \frac 1d  \bigg| \int\! \rmd \Lambda_{x,y} (v,w)\, \big(v-w + u_g^\varphi - u_f\big) \cdot  \big(v- u_f+  w - u_g^\varphi \big) \bigg| \\ & \le  \frac 2d \bigg[\int\! \rmd \Lambda_{x,y} (v,w)\, \big(|v-w|^2 + |u_g^\varphi - u_f|^2\big)\bigg]^{1/2} \\& \qquad \times \bigg[\int\! \rmd \Lambda_{x,y} (v,w)\, \big(|v - u_f|^2 + |w - u_g^\varphi|^2\big)\bigg]^{1/2} \\ & \le C \bigg[\int\! \rmd \Lambda_{x,y} (v,w)\, \big(|v-w|^2 + |u_g^\varphi - u_f|^2\big)\bigg]^{1/2}
\big( \sqrt {T_f} + \sqrt {T^\varphi_g} \big)\,.
\end{split}
\]
where in the first and last step we used \eqref{rc}. Therefore,
\[
\begin{split}
\bar S_2 & = \int\! \rmd r^\varphi(t)\, d \bigg(\frac{T_f(x,t) -T_g^\varphi (y,t)}{\sqrt {T_f(x,t)} + \sqrt {T_g^\varphi (y,t)}}\bigg)^2 \nonumber  \\ & \le C  \int\! \rmd r^\varphi(t) \int\! \rmd \Lambda_{x,y} (v,w)\, \big(|v-w|^2 + |u_g^\varphi(y,t) - u_f(x,t)|^2\big) \nonumber \\ & = C \int\! \rmd R(t)\, |v-w|^2 + C \int\! \rmd R(t)\,  |u_g^\varphi(y,t) - u_f(x,t)|^2  \nonumber \\ & = C \int\! \rmd R(t)\, |v-w|^2 + C S_1\,.
\end{split}
\]
Therefore, from Eq.~\eqref{E3} and the above estimate,
\begin{equation}
\label{S2}
S_2 \le C I(t) + C S_1 + C\eps\,.
\end{equation}

\smallskip
From Eqs.~\eqref{W<I}, \eqref{I<}, \eqref{S1}, \eqref{S2}, and Gr\"{o}nwall's inequality, Eq.~\eqref{conveps} follows and Theorem \ref{teo:lim} is proved.
\end{proof}

\begin{proof}[Proof of  Theorem \ref {teo:main1}]
We have,
\[
\begin{split}
\mc W_2 ( f^N_j(t), f(t) ^{\otimes j} ) & \leq \mc W_2 ( f^N_j(t), g(t)^{\otimes j} )+ \mc W_2(g(t)^{\otimes j}, f(t) ^{\otimes j}) \\ & \le C \sqrt j \bigg(\frac 1{N^{1/8}} \exp \frac{C(\varphi_0^3 + \|\nabla\varphi\|_\infty^2)}{r^{5d}\varphi_0^2}+ \sqrt\eps \bigg),
\end{split}
\]
after using Theorem \ref {teo:lim}, Theorem \ref{teo:main}, and inserting the expression of $\Gamma_\varphi$ given in Eq.~\eqref{N0Gammaphi}.

Since $ \|\nabla \varphi \|_\infty \le  C\eps^{-(d+1)}$,  $ \varphi_0 \le C\eps^{-d}$,  $r>C\eps$ (in order to satisfy Eq.~\eqref{r0}), and $N>N_\varphi$, the limit Eq.~\eqref{conv} follows provided $\eps$ is vanishing slowly when $N$ diverges (e.g., $\eps=(\log N)^{-\mu}$ with $\mu$ sufficiently small). The theorem is thus proved.  
\end{proof}


\end{document}